\documentclass[english]{article}

\usepackage[T1]{fontenc}
\usepackage[latin1]{inputenc}
\usepackage{geometry}
\geometry{verbose,tmargin=0cm,bmargin=4cm,lmargin=2cm,rmargin=2cm}
\usepackage{float}
\usepackage{mathrsfs}
\usepackage{amsmath}
\usepackage{amsthm}
\usepackage{amsfonts,mathtools,bm}
\usepackage{amssymb}
\usepackage{graphicx}
\usepackage{subfigure}
\usepackage{algorithm}
\usepackage{algpseudocode}
\usepackage{babel}

\makeatletter

\textwidth   5.5in \textheight  9in \oddsidemargin  0.3in
\evensidemargin 0.3in \topmargin 0in

\newcounter{hypA}
\newenvironment{hypA}{\refstepcounter{hypA}\begin{itemize}
  \item[({\bf A\arabic{hypA}})]}{\end{itemize}}
\newcounter{hypB}

\floatstyle{ruled}
\newfloat{algorithm}{tbp}{loa}
\providecommand{\algorithmname}{Algorithm}
\floatname{algorithm}{\protect\algorithmname}



\makeatother

\DeclareMathOperator{\erf}{erf}
\DeclareMathOperator*{\argmin}{argmin}

\newtheorem{theorem}{Theorem}[section]
\newtheorem{lem}{Lemma}[section]

\def\tr{\mathrm{t}}  
\def\bbR{\mathbb{R}}
\def\calN{\mathcal{N}}

\date{}

\begin{document}

\begin{center}

{\Large \textbf{On Large Lag Smoothing for Hidden Markov Models}}

\vspace{0.5cm}

BY JEREMIE HOUSSINEAU$^{1}$, AJAY JASRA$^{1}$ \& SUMEETPAL SINGH$^{2}$

{\footnotesize $^{1}$Department of Statistics \& Applied Probability,
National University of Singapore, Singapore, 117546, SG.}
{\footnotesize E-Mail:\,}\texttt{\emph{\footnotesize stahje@nus.edu.sg, staja@nus.edu.sg}}\\
{\footnotesize $^{2}$Department of Engineering,
University of Cambridge, Cambridge, CB2 1PZ and The Alan Turing Institute, UK.}
{\footnotesize E-Mail:\,}\texttt{\emph{\footnotesize sss40@cam.ac.uk}}
\end{center}

\begin{abstract}
\noindent In this article we consider the smoothing problem for hidden Markov models (HMM).
Given a hidden Markov chain $\{X_n\}_{n\geq 0}$ and observations $\{Y_n\}_{n\geq 0}$, our
objective is to compute $\mathbb{E}[\varphi(X_0,\dots,X_k)|y_{0},\dots,y_n]$ for some real-valued, integrable 
functional $\varphi$ and $k$ fixed, $k \ll n$ and for some realisation $(y_0,\dots,y_n)$ of $(Y_0,\dots,Y_n)$. We introduce a novel application of the multilevel Monte Carlo (MLMC) method with a coupling
based on the Knothe-Rosenblatt rearrangement. We prove that this method can approximate the afore-mentioned quantity with a mean square error (MSE) of $\mathcal{O}(\epsilon^2)$,
for arbitrary $\epsilon>0$ with a cost of $\mathcal{O}(\epsilon^{-2})$. This is in contrast to the same direct Monte Carlo method, which requires a cost of $\mathcal{O}(n\epsilon^{-2})$ for the same MSE.
The approach we suggest is, in general, not possible to implement, so the optimal transport methodology of \cite{span} is used, which directly
approximates our strategy. We show that our theoretical improvements are achieved, even under approximation, in several numerical examples. \\
\noindent \textbf{Key words}: Smoothing, Multilevel Monte Carlo, Optimal Transport.
\end{abstract}

\section{Introduction}

Given a hidden Markov chain $\{X_n\}_{n\geq 0}$, $X_n\in\mathsf{X}\subset\mathbb{R}^d$ and observations $\{Y_n\}_{n\geq 0}$, $Y_n\in\mathsf{Y}$, we consider a probabilistic model such
that for Borel $A\in\mathcal{X}$, $\mathbb{P}(X_0\in A) = \int_A f(x) dx$, for every $n\geq 1$, $x_{0:n-1}\in\mathsf{X}^n$
\begin{equation}
\label{eq:prediction}
\mathbb{P}(X_n\in A|x_{0:n-1}) = \int_A f(x_{n-1},x) dx 
\end{equation}
with $dx$ Lebesgue measure and for Borel $B\in\mathcal{Y}$ and all $n\geq 0$, $(y_{0:n-1},x_{0:n})\in\mathsf{Y}^n\times\mathsf{X}^{n+1}$
\begin{equation}
\label{eq:observation}
\mathbb{P}(Y_n\in B|y_{0:n-1},x_{0:n}) = \int_B g(x_n,y) dy,
\end{equation}
where we have used the compact notation $a_{k:n}=(a_k,\dots,a_n)$ for any $k,n \geq 0$ and any sequence $(a_n)_{n\geq 0}$ with the convention that the resulting vector of objects is null if $k>n$.
The model defined by \eqref{eq:prediction} and \eqref{eq:observation} is termed a hidden Markov model. In this article, given $y_{0:n}$, our objective is to compute $\mathbb{E}[\varphi(X_{0:k})|y_{0:n}]$ for some real-valued, integrable 
functional $\varphi$ and $k$ fixed, $k \ll n$, which we refer to as large-lag smoothing. Hidden Markov models and the smoothing problem are found in many real applications, such as finance, genetics and engineering; see e.g.~\cite{Cappe_2005} and the references therein.

The smoothing problem is notoriously challenging. Firstly, $\mathbb{E}[\varphi(X_{0:k})|y_{0:n}]$ is seldom available analytically and hence numerical methods are required.  Secondly, if one wants to compute $\mathbb{E}[\varphi(X_{0:k})|y_{0:n}]$
for several values of $n$, i.e.~potentially recursively, then several of the well-known methods for approximation of $\mathbb{E}[\varphi(X_{0:k})|y_{0:n}]$ can fail. For instance the particle filter (e.g.~\cite{doucet_johan} and the references therein) suffers from the well-known path degeneracy
problem (see e.g.~\cite{kantas}). Despite this, several methods are available for the approximation of $\mathbb{E}[\varphi(X_{0:k})|y_{0:n}]$, such as particle Markov chain Monte Carlo \cite{andrieu} or the PaRIS algorithm \cite{olsson}, which might be considered the current state-of-the-art.
The latter algorithm relies on approximating $\mathbb{E}[\varphi(X_{0:k})|y_{0:n^*}]$ for some $n^*<n$ and is then justified on the basis of using \emph{forgetting} properties of the smoother (see e.g.~\cite{Cappe_2005,dds1}). We will extend this notion as will be explained below.

The main approach that is followed in this paper, is to utilize the multilevel Monte Carlo method (e.g.~\cite{giles,hein}). Traditional applications of this method are associated to discretizations of continuum problems, but we adopt the framework
in a slightly non-standard way. To describe the basic idea, suppose one is interested in $\mathbb{E}_{\pi}[\varphi(X)]$ for $\pi$ a probability, $\varphi$ real-valued and bounded, but, one can only hope to approximate $\mathbb{E}_{\pi_l}[\varphi(X)]$ with $\pi_l$ a probability (assumed on the same space as $\pi$), $l\in\mathbb{N}$ and in some loose sense
one has $\pi_l$ approaches $\pi$ as $l$ grows. Now, given $\pi_0,\dots,\pi_L$ a sequence of increasingly more `precise' probability distributions on the same space, one trivially has
$$
\mathbb{E}_{\pi_L}[\varphi(X)] = \mathbb{E}_{\pi_0}[\varphi(X)] + \sum_{l=1}^L\{\mathbb{E}_{\pi_{l}}[\varphi(X)]-\mathbb{E}_{\pi_{l-1}}[\varphi(X)]\}.
$$
The approach is now to sample \emph{dependent} couplings of $(\pi_l,\pi_{l-1})$ independently for $1\leq l \leq L$ and approximate the difference $\mathbb{E}_{\pi_{l}}[\varphi(X)]-\mathbb{E}_{\pi_{l-1}}[\varphi(X)]$ using Monte Carlo.
The term $\mathbb{E}_{\pi_0}[\varphi(X)]$ is also approximated using Monte Carlo with i.i.d.~sampling from $\pi_0$. Then, given a `good enough' coupling and a characterization of the bias, for many practical problems the cost to achieve
a pre-specified MSE against i.i.d.\ sampling from $\pi_L$ and Monte Carlo, is significantly reduced.

We leverage the idea of MLMC where the `level' $l$ corresponds to the time parameter and $L$ is some chosen $n^*$, so as to achieve a given level of bias. The main issue is then how to sample
from couplings which are good enough. We show that when $d=1$ (the dimension of the hidden state) that using the optimal coupling, in terms of squared Wasserstein distance, can yield
significant improvements over the case where one directly approximates $\mathbb{E}[\varphi(X_{0:k})|y_{0:n}]$ with Monte Carlo and i.i.d~sampling from the smoother. That is, for $\epsilon>0$
given, to achieve a mean square error of $\mathcal{O}(\epsilon^2)$, the cost is $\mathcal{O}(\epsilon^{-2})$, whereas for the ordinary Monte Carlo method the cost is $\mathcal{O}(n\epsilon^{-2})$.
The same conclusion with $d>1$ can be achieved using the Knothe-Rosenblatt rearrangement. The main issue with our approach is that it cannot be implemented for most problems of practical interest.
However, using the methodology in \cite{span}, it can be approximated. We show that in numerical examples our predicted theory is verified, even under this approximation. We also compare our
method directly with PaRIS, showing substantial improvement in terms of cost for a given level of MSE.

This article is structured as follows. In Section \ref{sec:approach} we detail our approach and theoretical results. In Section \ref{sec:method} we demonstrate how our approach can be implemented in practice.
In Section \ref{sec:case} we give our numerical examples. 
Section \ref{sec:summ} summarizes the article.
The appendix includes the assumptions, technical results and proofs of our main results.

\subsection{Notations}

Let $(\mathsf{X},\mathcal{X})$ be a measurable space. 
For $\varphi:\mathsf{X}\rightarrow\mathbb{R}$ we write $\mathcal{B}_b(\mathsf{X})$
and $\textrm{Lip}(\mathsf{X})$ as the collection of bounded measurable and Lipschitz functions respectively.
For $\varphi\in\mathcal{B}_b(\mathsf{X})$, we write the supremum norm $\|\varphi\|=\sup_{x\in\mathsf{X}}|\varphi(x)|$.
For $\varphi\in\mathcal{B}_b(\mathsf{X})$, $\textrm{Osc}(\varphi)=\sup_{(x,y)\in\mathsf{X}\times\mathsf{X}}|\varphi(x)-\varphi(y)|$ and 
we write $\textrm{Osc}_1(\mathsf{X})$ for the set of functions $\varphi$ on $\mathsf{X}$ such that $\textrm{Osc}(\varphi)=1$.
For $\varphi\in\textrm{Lip}(\mathsf{X})$, we write the Lipschitz constant $\|\varphi\|_{\textrm{Lip}}$.
$\mathscr{P}(\mathsf{X})$  denotes the collection of probability measures on $(\mathsf{X},\mathcal{X})$.
For a measure $\mu$ on $(\mathsf{X},\mathcal{X})$
and a $\varphi\in\mathcal{B}_b(\mathsf{X})$, the notation $\mu(\varphi)=\int_{\mathsf{X}}\varphi(x)\mu(dx)$ is used.
Let $K:\mathsf{X}\times\mathcal{X}\rightarrow[0,1]$ be a Markov kernel and $\mu$ be a measure then we use the notations
$
\mu K(dy) = \int_{\mathsf{X}}\mu(dx) K(x,dy)
$
and for $\varphi\in\mathcal{B}_b(\mathsf{X})$, 
$
K(\varphi)(x) = \int_{\mathsf{X}} \varphi(y) K(x,dy).
$
For a sequence of Markov kernels $K_1,\dots,K_n$ we write 
$$
K_{1:n}(x_0,dx_n) = \int_{\mathsf{X}^{n-1}}\prod_{p=1}^n K_p(x_{p-1},dx_p) .
$$
For $\mu,\nu\in\mathscr{P}(\mathsf{X})$, the total variation distance 
is written $\|\mu-\nu\|_{\textrm{tv}}=\sup_{A\in\mathcal{X}}|\mu(A)-\nu(A)|$.
For $A\in\mathcal{X}$ the indicator is written $\mathbb{I}_A(x)$.
$\mathcal{U}_A$ denotes the uniform distribution on the set~$A$. $\mathcal{N}(a,b)$ is the one-dimensional Gaussian distribution of mean $a$ and variance $b$.

\section{Model and Approach}\label{sec:approach}

We are given a HMM and we seek to compute
$$
\mathbb{E}_{\pi_{n,0}}[\varphi(X_0)|y_{0:n}] = \frac{\int_{\mathsf{X}^{n+1}}\varphi(x_0)\prod_{p=0}^{n} g(x_p,y_p) f(x_{p-1},x_p)dx_{0:n}}{\int_{\mathsf{X}^{n+1}}\prod_{p=0}^{n} g(x_p,y_p) f(x_{p-1},x_p)dx_{0:n}}
$$
where $f(x_{-1},x_0):=f(x_0)$  and 
for ease of simplicity we suppose that $\varphi\in\mathcal{B}_b(\mathsf{X})\cap\textrm{Lip}(\mathsf{X})$ and $\mathsf{X}$ is a compact subspace of the real line.
$\pi_{n,0}$ is the probability density (we also use the same symbol for probability measure) of the smoother given $n$ observations at the co-ordinate at time 0. That is
$$
\pi_{n,0}(x_0|y_{0:n})\propto\int_{\mathsf{X}^n}\prod_{p=0}^{n} g(x_p,y_p) f(x_{p-1},x_p)dx_{1:n}.
$$
Let $0<n^*<n$ be fixed, then we propose to consider
$$
\mathbb{E}_{\pi_{n^*,0}}[\varphi(X_0)|y_{0:n^*}] = \mathbb{E}_{\pi_{0,0}}[\varphi(X_0)|y_{0}] + \sum_{p=1}^{n^*}\{\mathbb{E}_{\pi_{p,0}}[\varphi(X_0)|y_{0:p}] - \mathbb{E}_{\pi_{p-1,0}}[\varphi(X_0)|y_{0:p-1}]\}.
$$

\subsection{Case $\mathsf{X}\subset\mathbb{R}$}\label{sec:case_R}

Let us denote the CDF of $\pi_{p,0}$ as $\Pi_{p,0}$. 
An approximation of $\mathbb{E}_{\pi_{p,0}}[\varphi(X_0)|y_{0:p}] -$ $ \mathbb{E}_{\pi_{p-1,0}}[\varphi(X_0)|y_{0:p-1}]$ is
$$
\frac{1}{N_p}\sum_{i=1}^{N_p} [\varphi(\Pi_{p,0}^{-1}(U^i))-\varphi(\Pi_{p-1,0}^{-1}(U^i))]
$$
where for $i\in\{1,\dots,N_p\}$, $U^i\stackrel{\textrm{i.i.d.}}{\sim}\mathcal{U}_{[0,1]}$ and $\Pi_{p,0}^{-1}$ is the (generalized) inverse CDF of $\Pi_{p,0}$. If we do this independently for each $p\in\{1,\dots,n\}$ 
and use an independent estimator $\frac{1}{N_0}\sum_{i=1}^N \varphi(\Pi_0^{-1}(U^i))$ for $\mathbb{E}_{\pi_{0,0}}[\varphi(X_0)|y_{0}]$
one can estimate $\mathbb{E}[\varphi(X_0)|y_{0:n}]$. The utility of the coupling is that it is optimal in terms of 2-Wasserstein distance. We have the following result, where the assumption and proof are in the appendix.

\begin{theorem}\label{theo:thm1}
Assume  (A\ref{ass:1}). Then there exists $\rho\in(0,1)$, $C<+\infty$ such that for any $\varphi\in\mathcal{B}_b(\mathsf{X})\cap\textrm{\emph{Lip}}(\mathsf{X})$, 
$n^*\geq p\geq 1$, $N_p\geq 1$, 
we have
$$
\mathbb{V}\textrm{\emph{ar}}\Big[\frac{1}{N_p}\sum_{i=1}^{N_p} [\varphi(\Pi_{p,0}^{-1}(U^i))-\varphi(\Pi_{p-1,0}^{-1}(U^i))]\Big] \leq \frac{C\rho^{p-1}\|\varphi\|_{\textrm{\emph{Lip}}}^2}{N_p}.
$$
\end{theorem}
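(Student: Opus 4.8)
The plan is to reduce the variance bound to a statement about the $2$-Wasserstein distance between consecutive marginal smoothers, and then to control that distance through a geometric forgetting estimate. Since the $U^i$ are i.i.d., the variance of the empirical average equals $N_p^{-1}$ times the variance of a single summand, and the variance of a single summand is bounded by its second moment. Writing $U\sim\mathcal{U}_{[0,1]}$, I would first establish
$$
\mathbb{V}\mathrm{ar}\Big[\frac{1}{N_p}\sum_{i=1}^{N_p}[\varphi(\Pi_{p,0}^{-1}(U^i))-\varphi(\Pi_{p-1,0}^{-1}(U^i))]\Big]
\leq \frac{1}{N_p}\,\mathbb{E}\big[(\varphi(\Pi_{p,0}^{-1}(U))-\varphi(\Pi_{p-1,0}^{-1}(U)))^2\big].
$$
Applying the Lipschitz property of $\varphi$ pointwise pulls out $\|\varphi\|_{\textrm{Lip}}^2$ and leaves $\mathbb{E}[(\Pi_{p,0}^{-1}(U)-\Pi_{p-1,0}^{-1}(U))^2]$. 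Because $\mathsf{X}\subset\mathbb{R}$, the map $u\mapsto(\Pi_{p,0}^{-1}(u),\Pi_{p-1,0}^{-1}(u))$ is exactly the monotone (quantile) coupling of $\pi_{p,0}$ and $\pi_{p-1,0}$, which is optimal for the squared cost, so this last expectation is precisely $W_2(\pi_{p,0},\pi_{p-1,0})^2$. This is the step that makes the inverse-CDF coupling pay off.

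Next I would trade the Wasserstein distance for total variation using compactness of $\mathsf{X}$. Let $D$ be the diameter of $\mathsf{X}$. Since $|\Pi_{p,0}^{-1}(u)-\Pi_{p-1,0}^{-1}(u)|\leq D$ for every $u$, one has $W_2(\pi_{p,0},\pi_{p-1,0})^2\leq D\,W_1(\pi_{p,0},\pi_{p-1,0})$, and the one-dimensional identity $W_1(\pi_{p,0},\pi_{p-1,0})=\int_{\mathsf{X}}|\Pi_{p,0}(x)-\Pi_{p-1,0}(x)|\,dx$ together with the elementary bound $\sup_x|\Pi_{p,0}(x)-\Pi_{p-1,0}(x)|\leq\|\pi_{p,0}-\pi_{p-1,0}\|_{\textrm{tv}}$ yields
$$
W_2(\pi_{p,0},\pi_{p-1,0})^2 \leq D^2\,\|\pi_{p,0}-\pi_{p-1,0}\|_{\textrm{tv}}.
$$
At this point the whole claim reduces to the forgetting estimate $\|\pi_{p,0}-\pi_{p-1,0}\|_{\textrm{tv}}\leq C'\rho^{p-1}$.

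The remaining and genuinely substantive step is this forgetting bound, which is where Assumption (A\ref{ass:1}) enters. I would write the time-$0$ marginal smoother as $\pi_{p,0}(dx_0)\propto g(x_0,y_0)f(x_0)\,\beta_{p,0}(x_0)\,dx_0$, where $\beta_{p,0}(x_0)=\int\prod_{q=1}^{p}g(x_q,y_q)f(x_{q-1},x_q)\,dx_{1:p}$ is the backward function carrying the observations $y_1,\dots,y_p$. Introducing the backward kernels $B_q h(x_{q-1})=\int f(x_{q-1},x_q)g(x_q,y_q)h(x_q)\,dx_q$, one has $\beta_{p,0}=B_1\cdots B_p\mathbf{1}$, so $\beta_{p-1,0}$ and $\beta_{p,0}$ arise from applying the \emph{same} composition $B_1\cdots B_{p-1}$ (from time $p-1$ back to time $0$) to two different positive inputs at time $p-1$, namely $\mathbf{1}$ and $B_p\mathbf{1}$. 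Under the strong mixing condition encoded in (A\ref{ass:1}) (typically two-sided bounds on $f$), each $B_q$ is a strict contraction in the Hilbert projective metric with a common Birkhoff coefficient $\tau<1$, so the $p-1$ compositions contract any two positive inputs by $\tau^{p-1}$; since the Hilbert-metric contraction dominates the total variation of the induced normalized measures, this delivers $\|\pi_{p,0}-\pi_{p-1,0}\|_{\textrm{tv}}\leq C'\rho^{p-1}$ with $\rho=\tau$.

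The main obstacle is precisely this last paragraph: making rigorous that the extra observation $y_p$ acts as a perturbation at the far end of a length-$(p-1)$ backward recursion whose effect at time $0$ is damped geometrically, and tracking the normalization so that the Hilbert-metric contraction transfers cleanly to total variation. The reductions in the first two paragraphs are routine and the mixing hypothesis is used only here; one can alternatively run a Dobrushin-coefficient argument in place of the Hilbert metric, at the cost of slightly more bookkeeping. Combining the four displays and absorbing $D^2$ and $C'$ into a single constant $C$ gives the stated bound.
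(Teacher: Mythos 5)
Your proposal is correct, and its main reduction chain is exactly the paper's: variance of the i.i.d.\ average bounded by $N_p^{-1}$ times the second moment of one summand, the Lipschitz property extracting $\|\varphi\|_{\textrm{Lip}}^2$, the identity $\mathbb{E}[(\Pi_{p,0}^{-1}(U)-\Pi_{p-1,0}^{-1}(U))^2]=W_2(\pi_{p,0},\pi_{p-1,0})^2$ from the quantile coupling, and then the compactness bound trading $W_2^2$ for total variation (your route via $W_2^2\leq D\,W_1$ and the one-dimensional CDF identity for $W_1$ is a slightly more explicit derivation of the same inequality the paper states directly as $W_2^2\leq(\int_{\mathsf{X}}dx)^2\|\pi_{p,0}-\pi_{p-1,0}\|_{\textrm{tv}}$). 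The only genuine divergence is in the forgetting estimate $\|\pi_{p,0}-\pi_{p-1,0}\|_{\textrm{tv}}\leq C\rho^{p-1}$: the paper dispatches this as Lemma A.1(2), ``from the backward Markov chain representation of the smoother and (A1)'' with a pointer to Capp\'e--Moulines--Ryd\'en, and the related computation it does spell out (for Lemma A.1(3)) runs through backward kernels $B_k$ built from the filter and their Dobrushin coefficients; you instead work with the unnormalized backward Feynman--Kac operators, observing that $\beta_{p,0}$ and $\beta_{p-1,0}$ are the same composition $B_1\cdots B_{p-1}$ applied to $B_p\mathbf{1}$ and $\mathbf{1}$, and invoke Birkhoff contraction in the Hilbert projective metric, with (A1)'s two-sided bound on $g(x',y_p)f(x,x')$ supplying a uniform contraction coefficient and a uniform bound on $h(B_p\mathbf{1},\mathbf{1})$. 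This is a standard and valid alternative: the Hilbert metric is invariant under pointwise multiplication by the fixed positive factor $g(\cdot,y_0)f(\cdot)$, so the normalization is automatic, and the transfer to total variation of the normalized measures is a known comparison (e.g.\ the Atar--Zeitouni bound $\|\bar\mu-\bar\nu\|_{\textrm{tv}}\leq (2/\log 3)\,h(\mu,\nu)$), which you correctly flag as the one point needing care. What your route buys is a self-contained proof of the key lemma with explicit constants from the Birkhoff coefficient; what the paper's route buys is reuse of the same backward-kernel Dobrushin machinery for the conditional-CDF estimates needed in the multivariate Theorem 2, where your operator-theoretic argument would have to be redone coordinate by coordinate.
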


The main implication of the result is the following. In the approach to be considered later in this paper
the cost of computing (an approximation of) $(\Pi_{p,0}^{-1},\Pi_{p-1,0}^{-1})$ is $\mathcal{O}(1)$ per time step.
So the cost of this method is $C(n^*+\sum_{p=0}^{n^*} N_p)$. Thus the MSE and cost associated to this
algorithm are (at most in the first case)
$$
C(\|\varphi\|^2\vee \|\varphi\|_{\textrm{Lip}}^2)\Big(\frac{1}{N_0}+\sum_{p=1}^{n^*}\frac{\rho^{p-1}}{N_p} + \rho^{2n}\Big) 
$$
and
\begin{equation}
\label{eq:cost_ml}
C(n^*+\sum_{p=0}^{n^*} N_p).
\end{equation}
Let $\epsilon>0$ be given. To achieve an MSE of $\mathcal{O}(\epsilon^2)$ we can choose $n^*=|\log(\epsilon)/\log(\rho)|$  (here we of course mean $n^*=\lceil |\log(\epsilon) /\log(\rho)| \rceil$, but this is omitted for simplicity)  and $N_p=\epsilon^{-2}(p+1)^{-1-\delta}$ for any $\delta>0$
yields that the associated cost is $\mathcal{O}(\epsilon^{-2})$. If one just approximates $\mathbb{E}_{\pi_{n,0}}[\varphi(X_0)|y_{0:n}]$ using 
$$
\frac{1}{N}\sum_{i=1}^N \varphi(\Pi_{n,0}^{-1}(U^i))
$$
then, to achieve an MSE of $\mathcal{O}(\epsilon^2)$ the cost would be $\mathcal{O}(n\epsilon^{-2})$ which is considerably larger if  $n$ is large. That is, the cost 
of the ML approach is essentially $\mathcal{O}(1)$ w.r.t.\ $n$.
If one stops at $n^*=|\log(\epsilon)/\log(\rho)|$ and uses the estimate
$$
\frac{1}{N}\sum_{i=1}^N \varphi(\Pi_{n^*,0}^{-1}(U^i))
$$
 to achieve an MSE of $\mathcal{O}(\epsilon^2)$, the cost is $\mathcal{O}(\epsilon^{-2}|\log(\epsilon)|)$.
A similar approach can show that these results are even true when smoothing for $\mathbb{E}[\varphi(X_{0:k})|y_{0:n}]$ for $k$ fixed (and hence $\mathbb{E}[\varphi(X_{s:s+k})|y_{0:n}]$).
The strategy of choosing $n^*$ and $N_{0:n^*}$ detailed above, is the one used throughout the paper. Note that in practice, we do not know $\rho$, so we choose a value such as $\rho=0.8$  which should lead to an $n^*$ which is large enough. This is also the reason for setting $N_p=\epsilon^{-2}(p+1)^{-1-\delta}$ and
not $N_p=\epsilon^{-2}(\rho^{1/2})^{p-1}$ say. 

It is remarked that the compactness of $\mathsf{X}$ could be removed by using Kellerer's  extension of the Kantorovich-Rubenstein theorem (see \cite{edwards} for a summary) and then, given that the latter theory is applicable, to show that
there exists a $C<+\infty$, $\rho\in(0,1)$ such that for any $n^*\geq p\geq 1$ 
$$
\sup_{\varphi \in \textrm{Lip}_1(\mathsf{X})'}|\mathbb{E}_{\pi_{p,0}}[\varphi(X_0)|y_{0:p}] - \mathbb{E}_{\pi_{p-1,0}}[\varphi(X_0)|y_{0:p-1}]| \leq C\rho^{p-1}
$$
where $\textrm{Lip}_1(\mathsf{X})'$ is the collection of functions $\varphi:\mathsf{X}\rightarrow\mathbb{R}$ such that for every $(x,y)\in\mathsf{X}^2$, 
$|\varphi(x)-\varphi(y)|\leq |x-y|^2$. This can be achieved using the techniques in \cite{jasra_smooth}.
Such an extension is mainly of a technical nature and is not required in the continuing exposition.
We now establish that the construction here can be extended to the case $\mathsf{X}\subset\mathbb{R}^d$.

\subsection{Case $\mathsf{X}\subset\mathbb{R}^d$}

We consider the Knothe-Rosenblatt rearrangement, which is assumed to exist (see e.g.~\cite{span}). For simplicity of notation, we set $\mathsf{X}=\mathsf{E}^d$ for some compact $\mathsf{E}\subset\mathbb{R}$.
Denote by $\Pi_{p,0}(\cdot|x_{1:j})$ the conditional CDF of $\pi_{p,0}(x_{j+1}|x_{1:j})$ with $1\leq j\leq d-1$. 
Note that here we are dealing with the $d-$dimensional co-ordinate at time zero and we considering conditioning on the first $j$ of these dimensions.
Then to approximate $\mathbb{E}_{\pi_{p,0}}[\varphi(X_0)|y_{0:p}] - \mathbb{E}_{\pi_{p-1,0}}[\varphi(X_0)|y_{0:p-1}]$, 
sample $U_{1:d}^1,\dots,U_{1:d}^{N_p}$, where for $i\in\{1,\dots,N_p\}$, $U_{1:d}^i\stackrel{\textrm{i.i.d.}}{\sim}\mathcal{U}_{[0,1]^d}$. Then we have the estimate for $\varphi\in\mathcal{B}_b(\mathsf{X})\cap\textrm{Lip}(\mathsf{X})$
$$
\frac{1}{N_p}\sum_{i=1}^{N_p} [\varphi(\xi_{p,d}^i)-\varphi(\xi_{p-1,d}^i)]
$$
where for ease of notation, we have set $\xi_{p,1}^i = \Pi_{p,0}^{-1}(U_1^i)$, (resp.\ $\xi_{p-1,1}^i = \Pi_{p-1,0}^{-1}(U_1^i)$) and $\xi_{p,j}^i = (\xi_{p,1}^i,\dots,\xi_{p,j-1}^i,\Pi_{p,0}^{-1}(U_j^i|\xi_{p,j-1}^i))$, $2\leq j \leq d$,  (resp.\ $\xi_{p-1,j}^i 
= (\xi_{p-1,1}^i,\dots,\xi_{p-1,j-1}^i,$ $\Pi_{p-1,0}^{-1}(U_j^i|\xi_{p-1,j-1}^i))$, $2\leq j \leq d$). We have the following result, whose proof and assumptions are in the appendix.

\begin{theorem}\label{theo:thm2}
Assume  (A\ref{ass:1}-\ref{ass:2}). Then there exists $\rho\in(0,1)$, $C<+\infty$ such that for any $\varphi\in\mathcal{B}_b(\mathsf{X})\cap\textrm{\emph{Lip}}(\mathsf{X})$, 
$n^*\geq p\geq 1$, $N_p\geq 1$, 
we have
$$
\mathbb{V}\textrm{\emph{ar}}\Big[
\frac{1}{N_p}\sum_{i=1}^{N_p} [\varphi(\xi_{p,d}^i)-\varphi(\xi_{p-1,d}^i)]
\Big] \leq \frac{C\rho^{p-1}\|\varphi\|_{\textrm{\emph{Lip}}}^2}{N_p}.
$$
\end{theorem}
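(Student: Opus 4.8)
The plan is to mirror the two reductions used for Theorem \ref{theo:thm1} and then to confront the one genuinely new difficulty: unlike the monotone rearrangement in dimension one, the Knothe--Rosenblatt map is \emph{not} the $W_2$-optimal coupling, so its transport cost cannot simply be identified with $W_2^2(\pi_{p,0},\pi_{p-1,0})$ and must be estimated directly. First, since the vectors $U_{1:d}^1,\dots,U_{1:d}^{N_p}$ are i.i.d., the variance of the average is $N_p^{-1}$ times the variance of a single summand, which is in turn bounded by its second moment. Using $\varphi\in\textrm{Lip}(\mathsf{X})$ gives $|\varphi(\xi_{p,d}^1)-\varphi(\xi_{p-1,d}^1)|\leq \|\varphi\|_{\textrm{Lip}}\,\|\xi_{p,d}^1-\xi_{p-1,d}^1\|$, so the whole problem reduces to showing that the expected squared KR-transport cost satisfies $\mathbb{E}[\|\xi_{p,d}^1-\xi_{p-1,d}^1\|^2]\leq C\rho^{p-1}$.

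To control this cost I would decompose it over coordinates,
$$
\|\xi_{p,d}^1-\xi_{p-1,d}^1\|^2 = \sum_{j=1}^d \big|\Pi_{p,0}^{-1}(U_j^1|\xi_{p,j-1}^1)-\Pi_{p-1,0}^{-1}(U_j^1|\xi_{p-1,j-1}^1)\big|^2,
$$
and split the $j$-th increment, via the triangle inequality, into a \emph{forgetting} part in which the conditioning point is held fixed, $\Pi_{p,0}^{-1}(u|x)-\Pi_{p-1,0}^{-1}(u|x)$ (same argument, consecutive time indices), and a \emph{propagation} part in which the time index is held fixed but the conditioning points differ, $\Pi_{p-1,0}^{-1}(u|\xi_{p,j-1}^1)-\Pi_{p-1,0}^{-1}(u|\xi_{p-1,j-1}^1)$. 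The forgetting part is handled exactly as in one dimension: under (A\ref{ass:1}), the time-$0$ conditional smoothing law $\pi_{p,0}(\cdot|x_{1:j-1})$ differs from $\pi_{p-1,0}(\cdot|x_{1:j-1})$ by a geometrically small amount in $p$, since the extra observation $y_p$ lies $p$ steps into the future and its influence on the time-$0$ co-ordinate is damped by the mixing of the chain; because the conditional rearrangement is itself $W_2$-optimal in the single remaining scalar variable, the $L^2(U_j^1)$-norm of this part is bounded, uniformly in the conditioning point, by $c\,\rho^{(p-1)/2}$.

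The propagation part is where (A\ref{ass:2}) enters: it supplies the regularity of the Knothe--Rosenblatt map, namely a Lipschitz dependence of the conditional inverse CDF $\Pi_{p-1,0}^{-1}(u|\cdot)$ on its conditioning variable with a constant $L$ uniform in $p$, $u$ and $j$. Consequently the $L^2$-norm of the propagation part is bounded by $L\,\|\xi_{p,j-1}^1-\xi_{p-1,j-1}^1\|_{L^2}$. Writing $E_j:=\mathbb{E}[\|\xi_{p,j}^1-\xi_{p-1,j}^1\|^2]^{1/2}$, the two estimates combine into a linear recursion $E_j\leq c\,\rho^{(p-1)/2}+L\,E_{j-1}$ with $E_0=0$. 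Because the number of coordinates $d$ is \emph{fixed}, this recursion closes after finitely many steps to give $E_d\leq C(d,L)\,\rho^{(p-1)/2}$, hence $\mathbb{E}[\|\xi_{p,d}^1-\xi_{p-1,d}^1\|^2]\leq C\rho^{p-1}$, which together with the reductions of the first paragraph yields the claimed bound.

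The main obstacle is the propagation step, and it is exactly what (A\ref{ass:2}) is designed to supply. One must verify that the conditional inverse CDFs are genuinely Lipschitz in their conditioning argument with a constant that does not degrade with $p$, and that neither this constant $L$ nor the geometric forgetting rate deteriorates as the coordinate index $j$ increases; only then does the $d$-fold recursion close with a \emph{single} pair $(\rho,C)$, $\rho\in(0,1)$, $C<+\infty$, independent of $\varphi$, $p$ and $N_p$. Establishing this uniform regularity of the Knothe--Rosenblatt rearrangement, rather than the variance bookkeeping, is the genuinely technical part of the argument.
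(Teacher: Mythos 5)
Your proposal is correct and proves the theorem, but it takes a route that, while parallel to the paper's, differs in one technically meaningful way. The paper (Lemma \ref{lem:lem2}) argues by induction on the dimension and makes your forgetting/propagation split at the level of the conditional \emph{measures} in total variation: it bounds $\|\pi_{p,0}(\cdot|\xi_{p,d-1})-\pi_{p-1,0}(\cdot|\xi_{p-1,d-1})\|_{\textrm{tv}}$ by the triangle inequality, controls the forgetting piece by Lemma \ref{lem:lem1} 3., controls the propagation piece by $C|\xi_{p,d-1}-\xi_{p-1,d-1}|$ via Lemma \ref{lem:lem1} 1., 4., 5., and only then applies the one-dimensional inequality $W_2^2\leq C\|\cdot\|_{\textrm{tv}}$ to the whole split quantity. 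Because the propagation term thus enters through a total variation bound, the paper's recursion involves $\mathbb{E}[|\xi_{p,d-1}-\xi_{p-1,d-1}|]$ to the \emph{first} power and must invoke Jensen, which silently degrades the rate to $\rho^{(p-1)/2}$ at each dimension increment (harmless, since $\rho$ is merely redefined, $d$ being fixed). You instead split at the level of the Knothe--Rosenblatt maps themselves and handle the propagation piece by a pointwise Lipschitz property of $x\mapsto \Pi_{p-1,0}^{-1}(u|x)$; this keeps the whole recursion in $L^2$, avoids Jensen, and preserves the exponent $\rho^{p-1}$ after $d$ steps (your recursion should read $E_j\leq c\rho^{(p-1)/2}+(1+L)E_{j-1}$, since $E_j$ also carries the first $j-1$ coordinates, but this is immaterial). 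The one ingredient you invoke without proof --- uniform Lipschitz continuity of the conditional inverse CDF in its conditioning argument --- is genuine work, as you say, but it follows from exactly the ingredients of Lemma \ref{lem:lem1}: by 1.\ the conditional density is bounded below by some $c>0$ on the compact $\mathsf{E}$, and by 1.\ and 5.\ the conditional CDF is Lipschitz in the conditioning point uniformly in its main argument; writing $t=\Pi_{p-1,0}^{-1}(u|x)$, $t'=\Pi_{p-1,0}^{-1}(u|x')$ and using $\Pi_{p-1,0}(t|x)=u=\Pi_{p-1,0}(t'|x')$ gives $c|t-t'|\leq |\Pi_{p-1,0}(t|x')-\Pi_{p-1,0}(t|x)|\leq C|x-x'|$, uniformly in $u$, $p$ and $j$. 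One small caveat on framing: your remark that the KR map is not $W_2$-optimal, while true, does not distinguish your argument from the paper's --- neither proof uses global optimality in $d>1$; both exploit only the one-dimensional optimal coupling conditionally, coordinate by coordinate, and bound the expected squared transport cost of the KR coupling directly.
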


We end this section with some remarks. Firstly, the MLMC strategy could be debiased w.r.t.\ the time parameter using the trick in \cite{rg:15}, which is a straightforward extension. One minor issue with this methodology, is that the variance
can blow up in some scenarios. Secondly, the idea of using the approach in \cite{rg:15}, when approximating  $\mathbb{E}[\varphi(X_{0;n})|y_{0:n}]$ has been adopted in \cite{jacob}. The authors use a conditional version of the coupled particle filter
(e.g.~\cite{chopin3,mlpf}) to couple smoothers, versus the optimal Wasserstein coupling. The goal in \cite{jacob} is unbiased estimation which is complementary to ideas in this article, where we focus upon reducing the cost of large lag smoothing.

\section{Transport methodology}\label{sec:method}

\subsection{Standard Approach}\label{sec:transportMethodology}

The basic principle of the transport methodology introduced in \cite{span} is to determine a mapping $T$ relating a base distribution $\eta$, e.g.\ the normal distribution, to a potentially sophisiticated target distribution $\tilde\pi$ related to the problem of interest. The distribution $\eta$ should be easy to sample from so that, given the map $T$, we can obtain samples from $\tilde\pi$ by simply mapping samples from $\eta$ via $T$. More precisely, the considered mapping $T$ is characterised by
$$
T_{\#} \eta(x) = \eta(T^{-1}(x)) |\det \nabla T^{-1}(x)| = \tilde\pi(x),
$$
that is, the \emph{push-forward} distribution of $\eta$ by $T$ is $\tilde\pi$. Such a mapping can be approximated using deterministic or stochastic optimisation methods. However, the underlying optimisation problem is only amenable when the space on which $\tilde\pi$ is defined is of a low dimension, e.g.\ up to $4$. This is not the case in general for the smoothing distributions introduced in the previous sections, especially as the number of observations increases. This is addressed in \cite{span} by identifying the dependence structure between the random variables of interest. In particular, for a hidden Markov model on $\bbR^d$, it is possible to decompose the problem into transport maps of dimension~$2d$, which does not depend on the number $n$ of observations that define the smoother. The problem at time $p$ can be solved by introducing a mapping $T_p$ of the form
$$
T_p(x_p,x_{p+1}) =
\begin{bmatrix}
T^0_p(x_p, x_{p+1}) \\
T^1_p(x_{p+1})
\end{bmatrix}
$$
which will transform the $2d$-dimensional base distribution $\eta_{2d}$ into a target distribution related to the considered hidden Markov model, as detailed below. This target distribution can be expressed as
$$
\tilde\pi_p(x_p,x_{p+1}) \propto \eta_d(x_p) f\big(T^1_{p-1}(x_p), x_{p+1}\big) g(x_{p+1}, y_{p+1}),
$$
for any $p > 0$, which can be seen to be the 1-lag smoother. When $p=0$, we simply define $\tilde\pi_0(x_0,x_1) = f(x_0) f(x_0, x_1) g(x_0, y_0) g(x_1, y_1)$. The base distribution $\eta_{2d}$ (resp.\ $\eta_d$) is the standard normal distribution of dimension $2d$ (resp.\ $d$). The mapping $T_p$ can be embedded into the $2d(n+1)$-dimensional identity mapping as
$$
\bar{T}_p(x_0,\dots,x_n) = (x_0, \dots, x_{p-1}, T^0_p(x_p, x_{p+1}), T^1_p(x_{p+1}), x_{p+2}, \dots, x_n)^{\tr},
$$
with $\cdot^{\tr}$ denoting the matrix transposition. It follows that
$$
\bm{T}_n = \bar{T}_0 \circ \dots \circ \bar{T}_n
$$
is the map such that the pushforward $(\bm{T}_n)_{\#} \eta_{d(n+1)}$ is equal to the probability density function of the smoother at time $n$. Obtaining samples from the smoothing distribution is then straightforward: it suffices to sample from $\eta_{d(n+1)}$ and to map the obtained sample via~$\bm{T}_n$.

Even in low dimension, the optimisation problem underlying the computation of the transport maps of interest is not trivial. One first has to consider an appropriate parametrisation of these maps, e.g.\ via polynomial representations. The parameters of the considered representation then have to be determined using the following optimisation problem
\begin{equation}\label{eq:optimisation}
T_{p}^{*} = \argmin_T - \mathbb{E}\bigg[ \log \tilde\pi_p (T(X)) + \log \big(\det \nabla T(X)\big) - \log \eta_{2d}(X)  \bigg],
\end{equation}
where the minimum is taken over the set of monotone increasing lower-triangular maps. This minimisation problem can be solved numerically by considering a parametrised family of maps and deterministic or stochastic optimisation methods. Let $T$ be any acceptable map in the minimisation \eqref{eq:optimisation} and denote by $T^{(i)}$ the $i$\textsuperscript{th} component of $T$, which only depends on the $i$\textsuperscript{th} first variables, $i \in \{1,\dots,2d\}$, then the considered parametrisation can be expressed as
$$
T^{(i)}(x_1,\dots,x_i) = a_i(x_1,\dots,x_{i-1}) + \int_0^{x_i} b_i(x_1, \dots, x_{i-1},t)^2 d t
$$
for some real-valued functions $a_i$ and $b_i$ on $\bbR^{i-1}$ and $\bbR^i$ respectively. It is assumed that the functions $x_j \mapsto a_i(x_1,\dots,x_{i-1})$ and $x_j \mapsto b_i(x_1,\dots,x_{i-1},t)$ are Hermite Probabilists' functions extended with constant and linear components for any $j \leq i-1$, and the function $t \mapsto b_i(x_1,\dots,x_{i-1},t)$ is also a Hermite Probabilists' function which is only extended with a constant component. In particular, these functions take the form
\begin{align*}
a_i(x_1,\dots,x_{i-1}) & = \sum_{k = 1}^{2d(o_{\mathrm{map}}+1)} c_k \Phi_k(x_1,\dots,x_{i-1}) \\
b_i(x_1,\dots,x_{i-1},t) & = \sum_{k = 1}^{2do_{\mathrm{map}}} c'_k \Psi_k(x_1,\dots,x_{i-1},t)
\end{align*}
with $o_{\mathrm{map}}$ the map order, with $\{c_k\}_{k \geq 1}$ and $\{c'_k\}_{k \geq 1}$ some collections of real coefficients and with $\Phi_k$ and $\Psi_k$ basis functions based on the above mentioned Hermite Probabilists' functions. The expectation in \eqref{eq:optimisation} is then approximated using a Gauss quadrature of order $o_{\mathrm{exp}}$ in each dimension and the minimisation is solved via the Newton algorithm using the conjugate-gradient method for each step.

The desired function $T_p$ can be recovered through the relation
$$
T_p((x_{p,1},\dots,x_{p,d}),(x_{p+1,1}, \dots, x_{p+1,d})) = (S_{\sigma} \circ T^*_p \circ S_{\sigma})(x_{p,1},\dots,x_{p,d}, x_{p+1,1}, \dots, x_{p+1,d}),
$$
where $\sigma = (2d, 2d-1, \dots, 1)$ and $S_{\sigma}$ is the linear map corresponding to the permutation matrix of $\sigma$, which verifies $S_{\sigma}^{-1} = S_{\sigma}$.

\subsection{Fixed-Point Smoothing with Transport Maps}

The approach described in Section \ref{sec:transportMethodology} allows for obtaining samples from the distribution $\pi_{n,0}$ of $X_0$ given $(Y_0,\dots,Y_n) = (y_0,\dots,y_n)$ by simply retaining the first $d$ components of samples from $\eta_{d(n+1)}$ after mapping them through $\bm{T}_n$. However, the computational cost associated with the mapping of samples by $\bm{T}_n$ increases with $n$, making the complexity of the method of the order $\mathcal{O}(n^2)$.

This can however be addressed by considering $X_0$ as a parameter and by only propagating the transport map corresponding to the posterior distribution of $(X_0,X_n)$. This approach has been suggested in \cite[section~7.4]{span}. We assume in the remainder of this section that observations start at time step $1$ instead of $0$. When considering $X_0$ as a parameter, the elementary transport maps take the form
$$
T_p(x_0,x_p,x_{p+1}) =
\begin{bmatrix}
T^{X_0}_p(x_0) \\
T^0_p(x_0, x_p, x_{p+1}) \\
T^1_p(x_0, x_{p+1})
\end{bmatrix}.
$$
and the corresponding target distributions become
$$
\tilde\pi_1(x_0, x_1, x_2) \propto p_0(x_0) f(x_0, x_1) f(x_1, x_2) g(x_1, y_1) g(x_2, y_2),
$$
and
$$
\tilde\pi_p(x_0, x_p, x_{p+1}) \propto \eta_{2d}(x_0, x_p) f\big(T^1_{p-1}(x_0, x_p), x_{p+1} \big) g(x_{p+1}, y_{p+1}),
$$
for any $p > 1$. The transport map associated with the posterior distribution of $(X_0,X_n)$ is
$$
\hat{T}_n(x_0, x_n) = 
\begin{bmatrix}
T^{X_0}_1 \circ \dots \circ T^{X_0}_{n-1}(x_0) \\
T^1_{n-1}(x_0, x_n)
\end{bmatrix}.
$$
By recursively approximating the composition $T^{X_0}_1 \circ \dots \circ T^{X_0}_{n-1}$ by a single map, the computation of samples from the posterior distribution of $X_0$ becomes linear in time. The pseudo-code for this approach is given in Algorithm~\ref{alg:multilevel_transport}.

\begin{algorithm}
\caption{Multilevel transport}
\label{alg:multilevel_transport}
\begin{algorithmic}[1]
\State \emph{input:} $\epsilon$, $\delta$, $\rho$ 
\State \emph{Output:} estimate $\hat{X}_0$ of $X_0 \mid y_{0:n^*}$
\State $n^* = \log(\epsilon)/\log(\rho)$
\For {$p=1,\dots,n^*$}
\If{$p = 1$}
\State $\tilde\pi_p(x_0, x_1, x_2) \propto p_0(x_0) f(x_0, x_1) f(x_1, x_2) g(x_1, y_1) g(x_2, y_2)$
\Else 
\State $\tilde\pi_p(x_0, x_p, x_{p+1}) \propto \eta_{2d}(x_0, x_p) f\big(T^1_{p-1}(x_0, x_p), x_{p+1} \big) g(x_{p+1}, y_{p+1})$
\State
\Comment{$T^1_{p-1}$ is the second component of $\hat{T}_{p-1}$}
\EndIf
\State $\eta = \mathcal{N}(\mathbf{0}_{2d}, \mathbf{I}_{2d})$
\State $\hat{T}_p = \text{FilteringDistributionTransportMap}(\eta, \tilde\pi_p)$
\State 
\Comment{Compute transport map from $\eta$ to the law of $(X_0,X_p) \mid y_{1:p}$ based on $\tilde\pi_p$}
\State $N_p = \epsilon^2 (p+1)^{-1-\delta}$
\Comment{Compute the number of samples}
\For {$i=1,\dots,N_p$}
\State $S \sim \eta$
\State $\xi^i_p = \hat{T}_p(S)$
\If{$p = 1$}
\State $\zeta^i_p = \varphi(\xi^{i,1:d}_p)$
\Comment{Map the first $d$ components of $\xi^i_p$ through $\varphi$}
\Else
\State $\xi^i_{p-1} = \hat{T}_{p-1}(S)$
\State $\zeta^i_p = \varphi(\xi^{i,1:d}_p) - \varphi(\xi^{i,1:d}_{p-1})$
\EndIf
\EndFor
\State $\hat{X}_0 \leftarrow \hat{X}_0 + \frac{1}{N_p}\sum_{i=1}^{N_p} \zeta^i_p$
\EndFor
\end{algorithmic}
\end{algorithm}

\section{Case Studies}\label{sec:case}

\subsection{Linear Gaussian}

\subsubsection{Theoretical Result}

The results in Section \ref{sec:approach} do not apply to the linear Gaussian case. We extend our results to this scenario.
We assume that the dynamical and observations models are one-dimensional as well as linear and Gaussian such that the state and observation random variables at time $n$ can be defined as
\begin{eqnarray*}
X_n|x_{n-1} & \sim & \calN( \alpha x_{n-1}, \beta^2), \qquad  n\geq 1\\
Y_n|x_n & \sim & \calN( x_n, \tau^2 ), \qquad  n\geq 0
\end{eqnarray*}
and $X_0 \sim \mathcal{N}(0, \sigma^2)$, for some $\alpha \in \bbR$ and some $\beta, \sigma, \tau > 0$. We have the following result, whose proof is in the appendix.

\begin{theorem}
\label{thm:linearGaussian}
Assuming that $\mathbb{V}\textrm{\emph{ar}}( X_p \mid y_{0:p} ) \approx \gamma^2$ for all $p$ large enough, it holds that
$$
\mathbb{V}\textrm{\emph{ar}}\bigg[ \dfrac{1}{N_p} \sum_{i=1}^{N_p} [ \Pi_{p,0}^{-1}(U^i) - \Pi_{p-1,0}^{-1}(U^i) ] \bigg] = \mathcal{O}\bigg( \dfrac{1}{N_p} \Big(\alpha + \frac{\beta^2}{\alpha \gamma^2} \Big)^{-2p} \bigg).
$$
\end{theorem}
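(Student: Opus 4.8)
The plan is to reduce the statement to a one-dimensional computation about the marginal smoothing variance of $X_0$, and then to control that variance through the fixed-point Kalman smoother recursions. First I would note that, since the $U^i$ are i.i.d., the variance of the average equals $N_p^{-1}$ times the variance of a single term, so it suffices to bound $\mathrm{Var}[\Pi_{p,0}^{-1}(U) - \Pi_{p-1,0}^{-1}(U)]$ for $U \sim \mathcal{U}_{[0,1]}$. In the linear Gaussian model every conditional law is Gaussian, so the marginal smoother $\pi_{p,0}$ of $X_0$ given $y_{0:p}$ is $\calN(m_p, s_p^2)$ with $s_p^2 = \mathrm{Var}(X_0 \mid y_{0:p})$, whence $\Pi_{p,0}^{-1}(u) = m_p + s_p\,\Phi^{-1}(u)$ with $\Phi$ the standard normal CDF. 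The single coupled term therefore equals $(m_p - m_{p-1}) + (s_p - s_{p-1})\Phi^{-1}(U)$; its mean part is a deterministic function of $y_{0:p}$ and drops out, leaving $\mathrm{Var}[\Pi_{p,0}^{-1}(U) - \Pi_{p-1,0}^{-1}(U)] = (s_p - s_{p-1})^2$. The whole problem thus collapses to showing $(s_p - s_{p-1})^2 = \mathcal{O}(r^{2p})$ with $r = (\alpha + \beta^2/(\alpha\gamma^2))^{-1}$.

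Next I would set up the fixed-point smoother. Alongside $s_p^2$ I would track the cross-covariance $c_p = \mathrm{Cov}(X_0, X_p \mid y_{0:p})$ and the filtering variance $\gamma_p^2 = \mathrm{Var}(X_p \mid y_{0:p})$. Propagating the joint Gaussian law of $(X_0, X_p)$ through one prediction step ($X_{p+1} = \alpha X_p + W$, $W \sim \calN(0,\beta^2)$) and one observation update ($Y_{p+1} = X_{p+1} + V$, $V \sim \calN(0,\tau^2)$), then conditioning on $y_{p+1}$, gives, with $P_{p+1\mid p} = \alpha^2\gamma_p^2 + \beta^2$,
$$
c_{p+1} = \alpha c_p \frac{\tau^2}{P_{p+1\mid p}+\tau^2}, \qquad s_{p+1}^2 = s_p^2 - \frac{\alpha^2 c_p^2}{P_{p+1\mid p}+\tau^2}.
$$
So $c_p$ obeys a linear recursion with a contractive multiplier, while the increments of $s_p^2$ are quadratic in $c_p$.

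I would then exploit the steady-state assumption $\gamma_p^2 \approx \gamma^2$. Writing $P = \alpha^2\gamma^2 + \beta^2$, the filtering Riccati fixed point $\gamma^2 = P\tau^2/(P+\tau^2)$ rearranges to $\tau^2/(P+\tau^2) = \gamma^2/P$, so the cross-covariance multiplier becomes exactly
$$
r = \frac{\alpha\tau^2}{P+\tau^2} = \frac{\alpha\gamma^2}{\alpha^2\gamma^2 + \beta^2} = \Big(\alpha + \frac{\beta^2}{\alpha\gamma^2}\Big)^{-1},
$$
that is, the rate announced in the theorem. Hence $c_p = \mathcal{O}(r^p)$, and feeding this into the variance recursion yields $|s_p^2 - s_{p-1}^2| = \mathcal{O}(c_{p-1}^2) = \mathcal{O}(r^{2p})$. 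Because $s_p^2 \geq \mathrm{Var}(X_0 \mid y_{0:\infty}) > 0$, the sum $s_p + s_{p-1}$ is bounded away from zero, so $|s_p - s_{p-1}| = |s_p^2 - s_{p-1}^2|/(s_p + s_{p-1}) = \mathcal{O}(r^{2p})$ and a fortiori $(s_p - s_{p-1})^2 = \mathcal{O}(r^{2p})$. Multiplying by $N_p^{-1}$ closes the argument.

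The main obstacle I anticipate is the bookkeeping around the steady-state assumption: the clean multiplier $r$ only emerges once $\gamma_p^2$ has settled to $\gamma^2$, so I must absorb the finite transient (where $\gamma_p^2$, and hence the per-step multipliers, differ from their limits) into the constant of the $\mathcal{O}(\cdot)$ and argue that it perturbs only the prefactor, not the geometric rate $r^p$. A secondary technical point is verifying that $|r| < 1$ and that $\mathrm{Var}(X_0 \mid y_{0:\infty}) > 0$ in the regime considered, which is what legitimises both the geometric decay of $c_p$ and the division by $s_p + s_{p-1}$.
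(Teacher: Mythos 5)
Your proposal is correct, and it proves the theorem by a genuinely different route from the paper. The paper works backwards: it invokes the Rauch--Tung--Striebel recursions, writes the fixed-point smoothing mean $m_p$ and variance $v_p$ of $\pi_{p,0}$ as telescoping sums over the backward gains, so that $m_p - m_{p-1} = \alpha^p(m_{p|p}-m_{p|p-1})\prod_{i=0}^{p-1} d_i$ and $v_p - v_{p-1} = \alpha^{2p}(v_{p|p}-v_{p|p-1})\prod_{i=0}^{p-1} d_i^2$ with $d_i = v_{i|i}/v_{i+1|i}$, and then extracts the rate from the identity $\alpha^p \prod_{i=0}^{p-1} d_i = \prod_{i=0}^{p-1}\big(\alpha + \beta^2/(\alpha v_{i|i})\big)^{-1}$ together with $v_{i|i}\approx\gamma^2$. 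You work forwards: you augment the state with $X_0$ and track the cross-covariance $c_p = \mathrm{Cov}(X_0,X_p \mid y_{0:p})$, whose one-step multiplier $\alpha\tau^2/(P+\tau^2)$ you identify with the announced rate via the steady-state Riccati identity $\tau^2/(P+\tau^2)=\gamma^2/P$ --- the same quantity as the paper's product, but obtained as the contraction factor of an explicit linear recursion rather than by algebraic rearrangement of RTS gains; your recursions $c_{p+1}=\alpha c_p\,\tau^2/(P_{p+1\mid p}+\tau^2)$ and $s_{p+1}^2 = s_p^2 - \alpha^2 c_p^2/(P_{p+1\mid p}+\tau^2)$ are indeed the correct Kalman updates for the joint Gaussian law of $(X_0,X_{p+1})$. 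Your route buys two things the paper does not make explicit: first, the observation that the mean shift $m_p-m_{p-1}$ is deterministic given the (fixed) observations and so contributes nothing to the variance --- the paper bounds it anyway, which is harmless for an upper bound but conflates what is really a bias-type term with the variance; second, a strictly sharper estimate, since dividing $|s_p^2-s_{p-1}^2|=\mathcal{O}(r^{2p})$ by $s_p+s_{p-1}$ (bounded below) gives $(s_p-s_{p-1})^2=\mathcal{O}(r^{4p})$, whereas the paper in effect only uses $|\sigma_p-\sigma_{p-1}|\leq\sqrt{|v_p-v_{p-1}|}=\mathcal{O}(r^p)$. Conversely, the paper's telescoping representation also delivers the order of the mean increment $m_p-m_{p-1}=\mathcal{O}(r^p)$, which is the quantity controlling the bias side of the multilevel analysis. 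Your two flagged worries are benign: $|r|<1$ follows from the same Riccati identity (since $\gamma^2=P\tau^2/(P+\tau^2)<P$ gives $\gamma^2(1-\alpha^2)<\beta^2$, hence $|\alpha|\gamma^2(1-|\alpha|)<\beta^2$ whether $|\alpha|\leq 1$ or $|\alpha|>1$), and $s_p^2\geq\mathrm{Var}(X_0\mid y_0,X_1)>0$ by Gaussian conditioning on the extra variable $X_1$; and absorbing the finite transient into the constant is exactly the level of rigour the paper itself employs, since the theorem is stated under the approximation $v_{p|p}\approx\gamma^2$.
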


Theorem \ref{thm:linearGaussian} shows that, under assumptions on the parameters of the model, the variance of the approximated multilevel term at level $p$ tends to $0$ exponentially fast in $p$ and with an order of $1/N_p$ for the number of samples.  This theorem also indicates that the behaviour depends an all the parameters in the model, although implicitly in $\tau$. For instance, if $\beta \gg \tau$ then one can consider $\gamma = \tau$ in the above expression. The assumption about the variance of the filter can be justified in terms of reachability and observability of the system \cite{Kumar}.

This rate can get extremely beneficial for the proposed approach when $\beta$ is large and $\gamma$ is small, however it can also make it of little use in the opposite case. This does not come as a surprise since a large $\beta$ means that the initial condition is quickly forgotten so that obtaining a high number of samples from the smoother $\pi_{p,0}$ for large $p$ would be inefficient, whereas small values of $\beta$ incur a much higher dependency between the initial state and the observations at different time steps.

\subsubsection{Numerical Results}

The performance of the proposed method is first assessed in the linear-Gaussian case where an analytical solution of the fixed-point smoothing problem is available, this solution being known as the Rauch-Tung-Striebel smoother \cite{rauch}. More specifically, we consider the following model:
\begin{eqnarray*}
X_n | x_{n-1} & \sim & \mathcal{N}(\alpha x_{n-1}, \beta^2), \qquad  n\geq 1 \\
Y_n | x_n & \sim  & \mathcal{N}(x_n, \tau^2), \qquad  n\geq 0
\end{eqnarray*}
with $X_0 \sim \mathcal{N}(1, \sigma^2)$, where $\sigma = 2$ and $\alpha = \beta = \tau = 1$. The transport maps of interest are approximated to the order $o_{\mathrm{map}} = 3$ while the expectation is approximated to the order $o_{\mathrm{exp}} = 5$ and the minimisation is performed with a tolerance of $10^{-4}$. The number of samples at each time step as well as the time horizon $n^*$ is computed according to the method proposed in 
Section \ref{sec:case_R} with different values for the parameter $\epsilon$. The performance of the proposed method is compared against the PaRIS algorithm introduced in \cite{olsson} using the observations $y_1,\dots,y_{25}$ with a varying number $N$ of samples and with $\tilde{N} = 2$ terms for the propagation of the estimate of $X_0$. In the simulations, it always holds that $n^* \leq 25$ to ensure the fairness of the comparison. The criteria for performance assessment is the MSE at the final time step, defined as
$$
\dfrac{1}{M} \sum_{i=1}^M (\hat{x}_i - x^*)^2
$$
where $M$ is the number of Monte Carlo simulations, $\hat{x}_i$ is the estimate of $X_0 \mid y_{1:n^*}$ (with $n^* = 25$ for the PaRIS algorithm) and where $x^*$ is the corresponding estimate given by the Rauch-Tung-Striebel smoother.

The values of the MSE at the final time obtained in simulations are shown in Figure \ref{fig:linearGaussian} where the proposed approach displays smaller errors than the PaRIS algorithm for different values of $\epsilon$ and $N$. The advantage when representing the probability distributions of interest with transport maps is that the computational effort required to obtain a sample is extremely limited once the maps have been determined. For instance, the highest and lowest considered values of $\epsilon$ in Figure \ref{fig:linearGaussian} correspond to $N_1 = 1250$ and $N_1 = 500,000$ samples respectively, which induces a comparatively small increase in computational time.

\begin{figure}
\centering
\includegraphics[width=.75\textwidth]{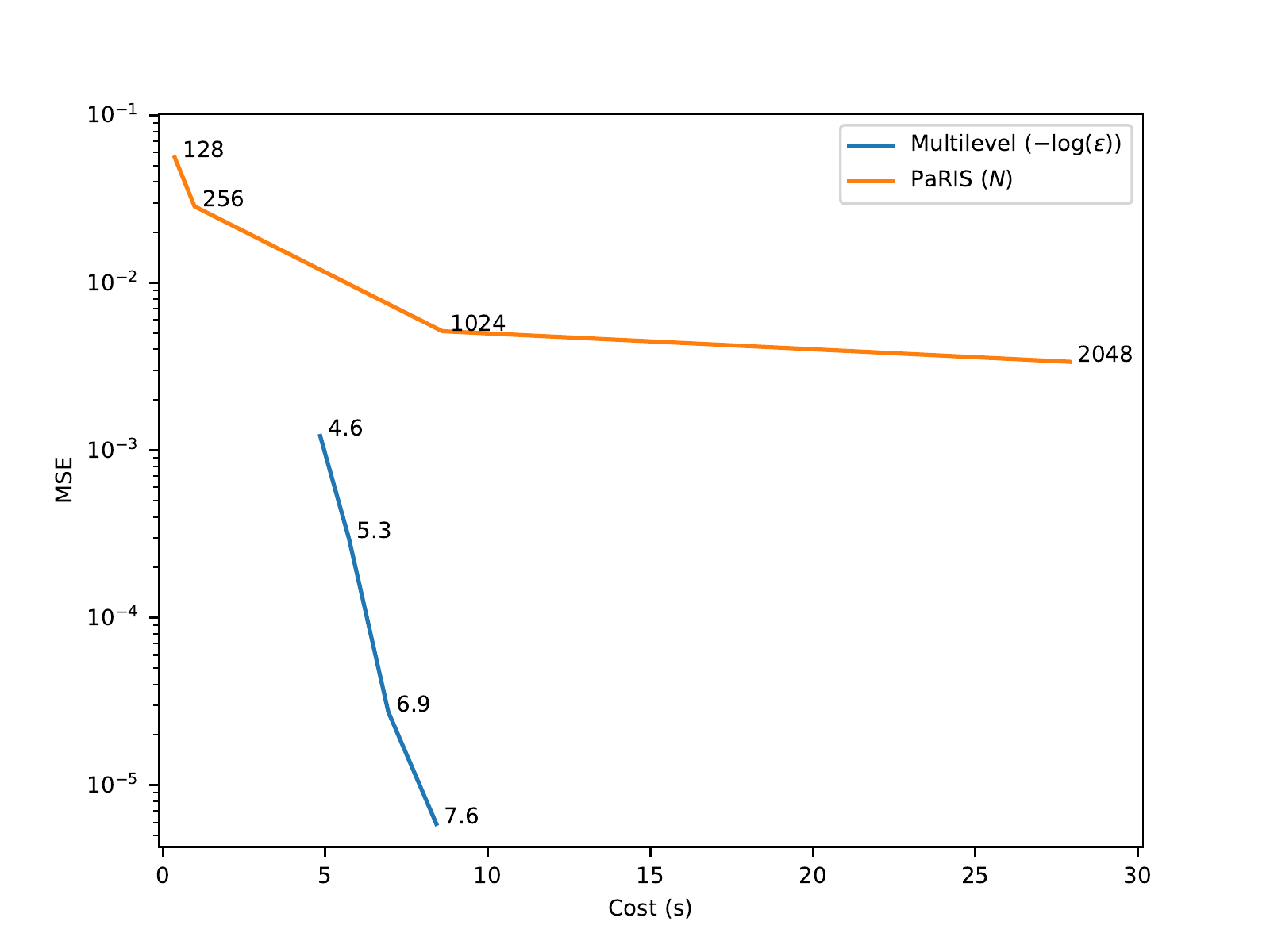}
\caption{Performance of the proposed method against the PaRIS algorithm with the linear-Gaussian model, averaged over 100 Monte Carlo simulations. The reference for the computation of the MSE is the Rauch-Tung-Striebel smoother. The displayed cost for the multilevel approach includes the computation of the transport maps.}
\label{fig:linearGaussian}
\end{figure}

\subsection{Stochastic Volatility Model}

In order to further demonstrate the performance of the proposed approach, the assessment conducted in the previous section is applied to a non-linear case. A stochastic volatility model is considered with
\begin{eqnarray*}
X_n & = & \mu + \phi(X_{n-1} - \mu) + V_n, \qquad  n\geq 1, \qquad X_0 \sim \mathcal{N}\Big(\mu, \dfrac{1}{1-\phi^2}\Big) \\
Y_n & =&  W_n \exp\Big(\frac{1}{2}X_n \Big), \qquad n\geq 0
\end{eqnarray*}
with $V_n \sim \calN(0,\beta^2)$ and $W_n \sim \calN(0,1)$, where $\mu = -0.5$, $\phi = 0.95$ and $\beta = 0.25$. In the absence of an analytical solution, the reference is determined by the PaRIS algorithm with $N = 2^{13}$ samples. Since the observation process of this model is generally less informative than the one of the Gaussian model, the PaRIS algorithm is given the observations up to the time step $50$ and, similarly, it is ensured that $n^* \leq 50$ for the proposed approach. The other parameters are the same as in the linear-Gaussian case.

The MSE at the final time obtained for the two considered methods is shown in Figure~\ref{fig:stochasticVolatility}. Once again, the error for the proposed approach is lower than for the PaRIS algorithm although the difference is less significant. In particular, the gain in accuracy between the lowest and the second lowest value of $\epsilon$ seem to indicate that simply increasing the number of samples would not allow for reducing the error much further. However, increasing the order of the transport maps or decreasing the tolerance in the optimisation could further reduce the error, although with a significantly higher computational cost.

\begin{figure}
\centering
\includegraphics[width=.75\textwidth]{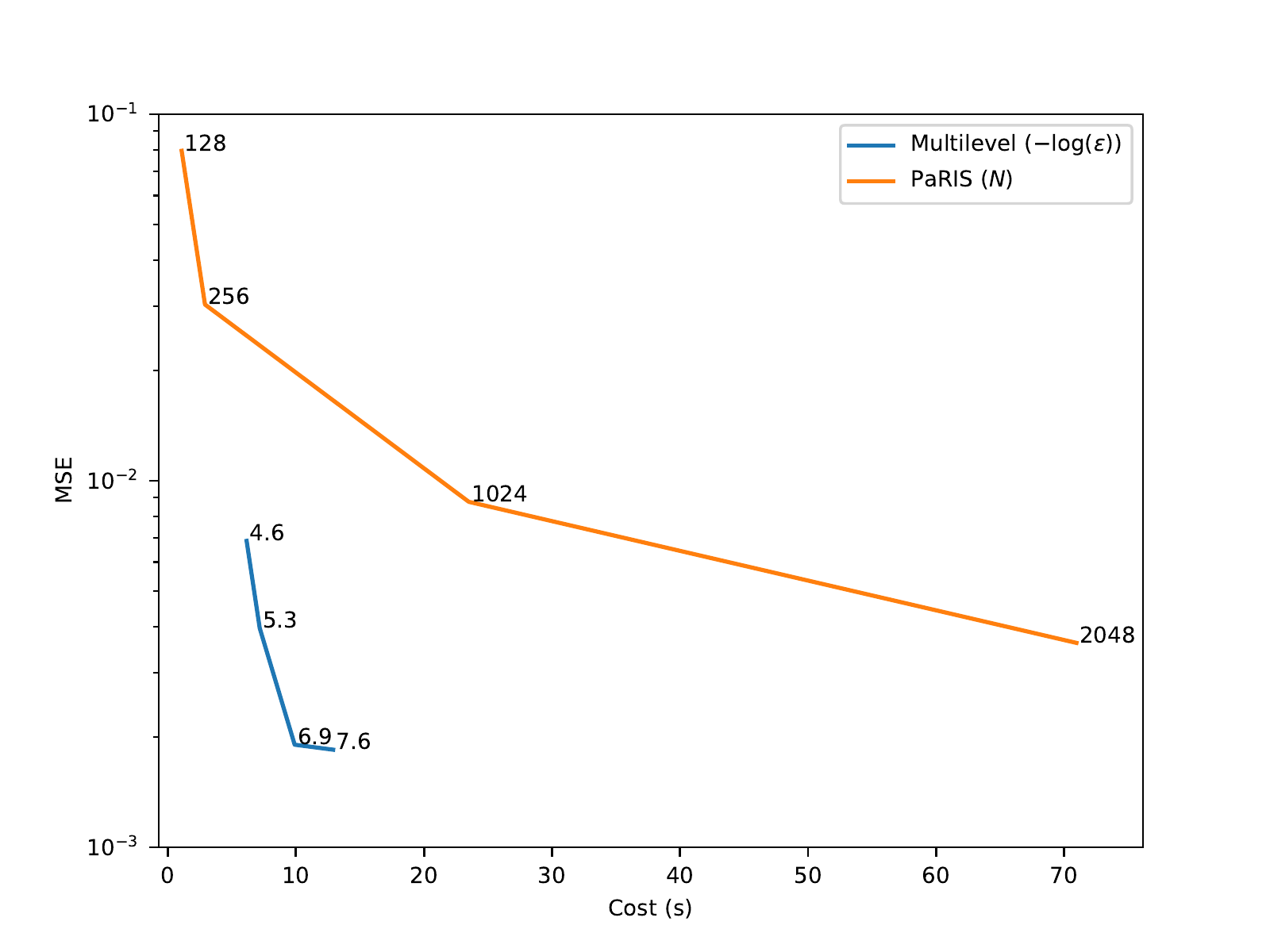}
\caption{Performance of the proposed method against the PaRIS algorithm with the stochastic volatility model, averaged over 100 Monte Carlo simulations. The reference for the computation of the MSE is the PaRIS algorithm with $2^{13}$ samples. The displayed cost for the multilevel approach includes the computation of the transport maps.}
\label{fig:stochasticVolatility}
\end{figure}

The computational costs obtained for the two models considered in simulations are shown in Figure \ref{fig:complexity} for different values of $\epsilon$. These results confirm the order $\mathcal{O}(\epsilon^{-2})$ that was predicted in Section \ref{sec:approach}.

\begin{figure}
\centering
\includegraphics[width=.75\textwidth]{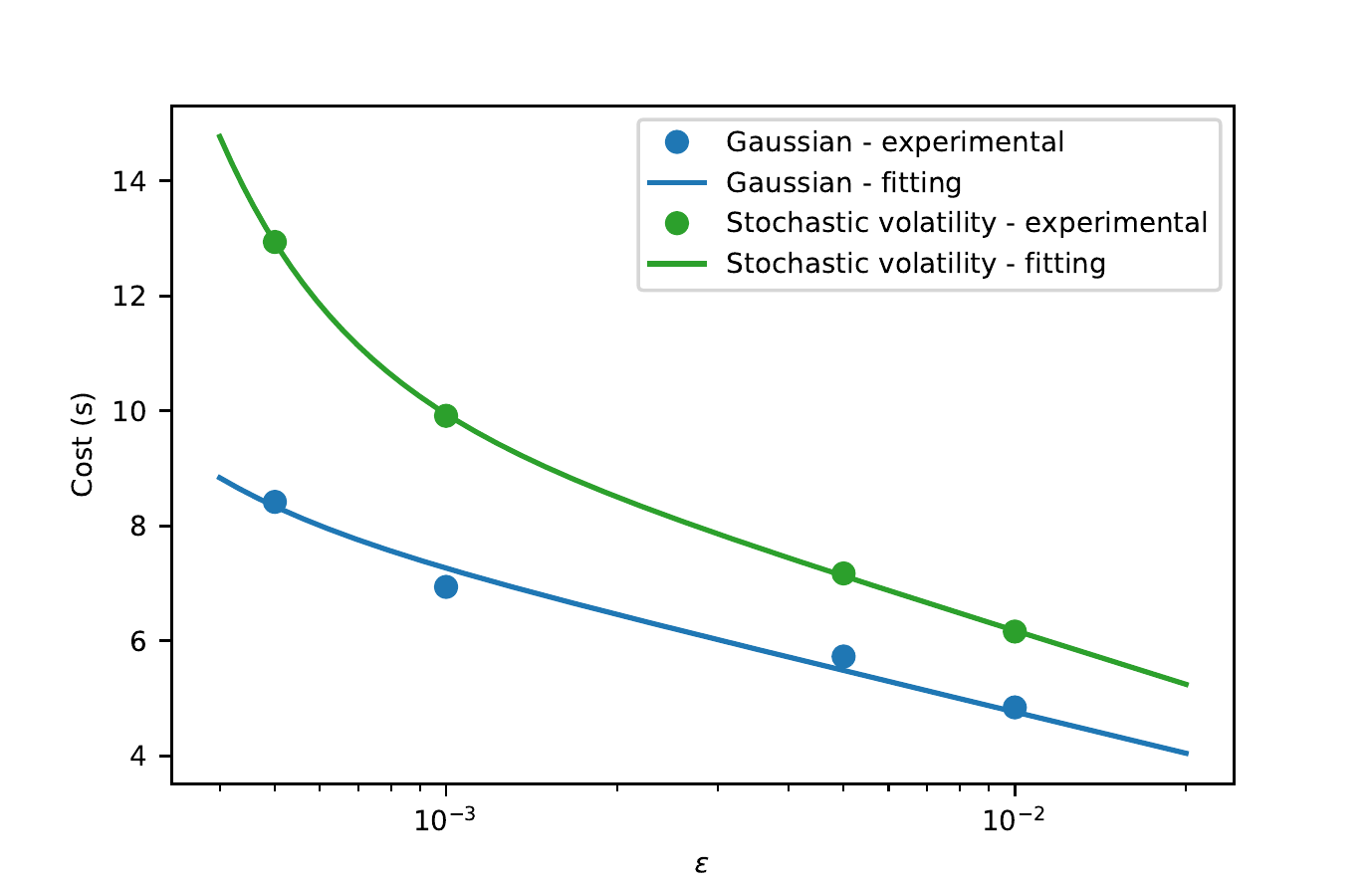}
\caption{Computational cost as a function of $\epsilon$, averaged over 100 Monte Carlo simulations. The fitted curves are based on a function of the form $\epsilon \mapsto -a \epsilon^{-2} - b \log(\epsilon)$, with $a$ and $b$ some parameters, which is justified by the form of the cost \eqref{eq:cost_ml}.}
\label{fig:complexity}
\end{figure}

\section{Summary}\label{sec:summ}

In this article we have considered large lag smoothing for HMMs, using the MLMC method. We
showed that under an optimal coupling when the hidden state is in dimension 1 or higher, but
on a compact space that, essentially, the cost can be decoupled from the time parameter of the smoother. As this optimal method is not possible in practice, we showed how it could be approximated and established numerically that our theory still holds in this approximated case. Several extensions to the work are possible. Firstly, to extend our theoretical results to the case of the approximated coupling. Secondly, to investigate whether the coupling used in \cite{jacob} can also yield, theoretically, the same improvements that have been seen in the work in this article.

\subsection*{Acknowledgements}

All authors were supported by Singapore Ministry of Education AcRF tier 1 grant R-155-000-182-114. AJ is affiliated with the Risk Management Institute, OR and analytics cluster and the Center for Quantitative Finance at NUS.

\appendix

\section{Variance Proofs}

We write the density (or probability measure) of the smoother, at time $p$, on the co-ordinate at time zero as $\pi_{p,0}$ and the associated CDF as $\Pi_{p,0}$ (with generalized inverse $\Pi_{p,0}^{-1}$).
Recall that throughout $\mathsf{X}$ is a compact subspace of $\mathbb{R}^d$. 
Throughout the observations are fixed and often omitted from the notations.
The appendix gives our main assumptions, followed by a technical Lemma (Lemma \ref{lem:lem1}) which
features some technical results used in the proofs. Then the proof of Theorem \ref{theo:thm1} is given. The appendix is concluded by a second technical Lemma (Lemma \ref{lem:lem2})
followed by the proof of Theorem \ref{theo:thm2}.

\begin{hypA}\label{ass:1}
There exists $0<\underline{C}<\overline{C}<+\infty$ such that
\begin{eqnarray*}
\inf_{x\in\mathsf{X}}g(x,y_0)f(x)\wedge\inf_{p\geq 1}\inf_{(x,x')\in\mathsf{X}^2} g(x',y_p) f(x,x') & \geq &\underline{C} \\
\sup_{x\in\mathsf{X}}g(x,y_0)f(x)\vee\sup_{p\geq 1}\sup_{(x,x')\in\mathsf{X}^2} g(x',y_p) f(x,x') & \leq &\overline{C}.
\end{eqnarray*}
\end{hypA}
\begin{hypA}\label{ass:2}
There exists $C<+\infty$ such that for every $(x,x')\in\mathsf{X}^2$
\begin{eqnarray*}
|g(x,y_0)-g(x',y_0)| & \leq & C |x-x'| \\
\sup_{z\in\mathsf{X}}|f(x,z)-f(x',z)| & \leq & C |x-x'| \\
|f(x)-f(x')| & \leq & C |x-x'| .
\end{eqnarray*}
\end{hypA}

Below $\pi_{p,0}(\cdot|x_{1:j})$ denotes the probability of the $(j+1)^{th}$ co-ordinate of the smoother at time $0$, given the first $j-$co-ordinates at time 0, and conditional upon
the observations up-to time $p$.

\begin{lem}\label{lem:lem1}
Assume (A\ref{ass:1}-\ref{ass:2}).  Then there exists $(C,C')\in(0,\infty)^2$, $\rho\in (0,1)$ such that  
\begin{enumerate}
\item{for any $1\leq j \leq d$, $\sup_{p\geq 0}\pi_{p,0}(x_{0,1:j})\leq C$, $\inf_{p\geq 0}\pi_{p,0}(x_{0,1:j})\geq C'$}
\item{for any $p\geq 1$, $\|\pi_{p,0}-\pi_{p-1,0}\|_{\textrm{\emph{tv}}}\leq C \rho^{p-1}$}
\item{for any $1\leq j \leq d$, $p\geq 1$, $\sup_{x_{1:j}\in\mathsf{E}^j}\|\pi_{p,0}(\cdot|x_{1:j})-\pi_{p-1,0}(\cdot|x_{1:j})\|_{\textrm{\emph{tv}}}\leq C \rho^{p-1}$}
\item{for any $p\geq 0$, $(x,x')\in\mathsf{X}^2$, $|\pi_{p,0}(x)-\pi_{p,0}(x')| \leq C |x-x'|$}
\item{for any $p\geq 0$, $1\leq j \leq d$, $(x_{1:j},x_{1:j}')\in(\mathsf{E}^j)^2$, $|\pi_{p,0}(x_{1:j})-\pi_{p,0}(x_{1:j}')| \leq C |x_{1:j}-x_{1:j}'|$.}
\end{enumerate}
\end{lem}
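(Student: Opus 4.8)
The plan is to reduce the entire lemma to estimates on the time-zero backward function
$$
h_p(x_0) = \int_{\mathsf{X}^p} \prod_{q=1}^p g(x_q,y_q) f(x_{q-1},x_q)\, dx_{1:p},
$$
so that $\pi_{p,0}(x_0) \propto g(x_0,y_0) f(x_0) h_p(x_0)$, with $h_p$ obtained by applying backward operators $T_{y_q}(\psi)(x) = \int_{\mathsf{X}} g(x',y_q) f(x,x') \psi(x')\,dx'$. The engine of the whole lemma is that, by (A\ref{ass:1}), the product $g(x',y_q)f(x,x')$ lies in $[\underline{C},\overline{C}]$ uniformly, so every backward integral has uniformly bounded range: for any two states $x,\tilde{x}$, bounding the numerator by $\overline{C}$ and the denominator by $\underline{C}$ gives $h_p(x)/h_p(\tilde{x})\le \overline{C}/\underline{C}$. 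Combined with normalisation and the finite Lebesgue volume $\lambda=|\mathsf{X}|$, this yields the uniform upper and lower density bounds, i.e.\ part 1 for $j=d$; integrating out the last $d-j$ coordinates over the compact set $\mathsf{E}^{d-j}$ transfers the bounds to every marginal, giving part 1 in general.

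For the forgetting statement (part 2) I would compare $h_p$ and $h_{p-1}$, which are obtained by applying the \emph{same} sequence of $p-1$ backward operators $T_{y_1},\dots,T_{y_{p-1}}$ to two initial functions — the constant $1$ in the case of $h_{p-1}$ and its one-step update $T_{y_p}1$ in the case of $h_p$ — whose Hilbert projective distance is at most $\log(\overline{C}/\underline{C})$ by (A\ref{ass:1}). Under (A\ref{ass:1}) each $T_{y_q}$ is a Birkhoff contraction of the Hilbert metric with a coefficient $\rho\in(0,1)$ depending only on $\overline{C}/\underline{C}$; propagating the bounded initial distance through the $p-1$ applications gives $d_H(h_p,h_{p-1})\le C\rho^{p-1}$, and the standard passage from Hilbert distance to total variation yields part 2. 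It is worth extracting slightly more here: the same contraction together with the uniform density bounds of part 1 delivers the stronger pointwise estimate $\sup_{x_0}|\pi_{p,0}(x_0)-\pi_{p-1,0}(x_0)|\le C\rho^{p-1}$, which, marginalised over compact coordinates, I will need below.

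The Lipschitz parts (4 and 5) follow from the product representation $\pi_{p,0}(x_0)\propto g(x_0,y_0)f(x_0)h_p(x_0)$. The factors $g(\cdot,y_0)$ and $f(\cdot)$ are Lipschitz by (A\ref{ass:2}) and bounded because a Lipschitz function on the compact set $\mathsf{X}$ is bounded; the $x_0$-dependence of $h_p$ enters only through $f(x_0,x_1)$ in the outermost integral, so the bound $\sup_z|f(x,z)-f(x',z)|\le C|x-x'|$ of (A\ref{ass:2}), together with the uniform control of the remaining backward mass from (A\ref{ass:1}), makes $h_p$ Lipschitz with a constant uniform in $p$. Since $Z_p=\int g(x_0,y_0)f(x_0)h_p(x_0)\,dx_0$ is bounded below uniformly (part 1), the normalised density is uniformly Lipschitz, proving part 4; marginalising over the last $d-j$ coordinates preserves the Lipschitz constant and gives part 5.

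Finally, for the conditional forgetting (part 3) I would write $\pi_{p,0}(x_{j+1}\mid x_{1:j}) = \pi_{p,0}(x_{1:j+1})/\pi_{p,0}(x_{1:j})$ and apply the elementary quotient estimate $|a/b-a'/b'|\le |a-a'|/b + a'|b-b'|/(bb')$ with $a=\pi_{p,0}(x_{1:j+1})$, $b=\pi_{p,0}(x_{1:j})$ and primes for level $p-1$. The denominators are bounded below uniformly (part 1), while $|a-a'|$ and $|b-b'|$ are controlled pointwise by the sup-norm forgetting bound extracted in part 2; integrating the resulting pointwise estimate over $x_{j+1}\in\mathsf{E}$ then produces the $C\rho^{p-1}$ bound uniformly in $x_{1:j}$. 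The main obstacle is part 2: pinning down the Birkhoff contraction coefficient of the backward operator and, above all, upgrading the conclusion from a total-variation to a pointwise (sup-norm) forgetting rate, since the latter is exactly what makes the quotient estimate in part 3 go through. The remaining steps are routine given (A\ref{ass:1})--(A\ref{ass:2}) and the repeated use of compactness.
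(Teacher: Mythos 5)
Your proposal is correct, and its overall architecture coincides with the paper's: parts 1, 4 and 5 are dispatched exactly as the paper does (ratio bounds from (A\ref{ass:1}) plus compactness for 1; the product representation of $\pi_{p,0}$, Lipschitz continuity from (A\ref{ass:2}) and the lower bound on the normalising constant for 4; marginalisation for 5), and your part 3 uses the identical two-term quotient decomposition, with denominators controlled by part 1 and numerators by a \emph{pointwise} forgetting estimate. The genuine divergence is the engine producing that pointwise estimate. The paper works on the measure side: it writes the time-zero marginal as the filter at time $p$ (resp.\ $p-1$) propagated through the backward kernels $B_{p-1},\dots,B_1$ and a final ``frozen'' kernel $B_0(\cdot,x_{1:j})$, and contracts via Dobrushin coefficients, which delivers the sup-norm bound \eqref{eq:pref_eq1} directly as an oscillation estimate, uniformly in $x_{1:j}$; part 2 itself is simply cited from the standard backward-representation literature. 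You instead work on the function side, comparing the unnormalised backward functions $h_p$ and $h_{p-1}$ as images of $T_{y_p}1$ and $1$ under the same $p-1$ positive operators, and contracting in the Hilbert projective metric via Birkhoff's theorem. This dual route is sound: under (A\ref{ass:1}) each operator has image of finite projective diameter, so the contraction coefficient is explicit in $\overline{C}/\underline{C}$, and the upgrade from Hilbert distance to a sup-norm bound on densities --- the obstacle you rightly single out as the crux --- does go through, since projective invariance plus the fact that both normalised densities integrate to one forces the ratio $\pi_{p,0}/\pi_{p-1,0}$ to cross $1$, whence $d_H$ controls $\sup_x|\log(\pi_{p,0}(x)/\pi_{p-1,0}(x))|$ and the part-1 upper bound converts this to $\sup_x|\pi_{p,0}(x)-\pi_{p-1,0}(x)|\leq C\rho^{p-1}$, which marginalises over compact coordinates as you need. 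What each approach buys: yours is more self-contained, with a contraction constant read off directly from the mixing bounds rather than appealing to standard Dobrushin computations, and it proves part 2 and the pointwise estimate in one stroke; the paper's route avoids the Hilbert-to-pointwise conversion entirely (the Dobrushin/oscillation bound is already in sup-norm form) and handles the conditioning variable $x_{1:j}$ uniformly with no extra step, by absorbing it into the frozen kernel $B_0(\cdot,x_{1:j})$. One caveat at the same level of informality as the paper's own one-line proof of part 4: when making $h_p$ Lipschitz relative to the normalisation, both you and the paper implicitly use boundedness of the individual factors (e.g.\ of $g(\cdot,y_1)$), which (A\ref{ass:1}) only gives for the products $gf$; this is harmless under the usual reading of the assumptions but is glossed in both arguments.
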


\begin{proof}
1.\ follows trivially from (A\ref{ass:1}) and the compactness of $\mathsf{E}$. 2.\ follows from the backward Markov chain representation of the smoother
 and (A\ref{ass:1}); see for instance \cite{Cappe_2005} and the references therein.

3.\ to prove this result, we first consider controlling for any fixed $1\leq j \leq d$  $p\geq 1$,
$$
|\pi_{p,0}(x_{1:j})-\pi_{p-1,0}(x_{1:j})|.
$$
Denoting $\pi_{(p)}$ as the filter at time $p$ 
and setting for $k\geq 0$
$$
B_{k}(x_{k+1},x_k) = \frac{\pi_{(k)}(x_k)f(x_k,x_{k+1})}{\int_{\mathsf{X}}\pi_{(k)}(x_k)f(x_k,x_{k+1})dx_k}
$$
we can write
$$
|\pi_{p,0}(x_{1:j})-\pi_{p-1,0}(x_{1:j})| =
\textrm{Osc}(B_0(\cdot,x_{1:j}))\Big|[\pi_{(p)}B_{p-1}-\pi_{(p-1)}](B_{p-2:1})\Big(\frac{B_0(\cdot,x_{1:j})}{\textrm{Osc}(B_0(\cdot,x_{1:j}))}\Big)\Big|.
$$
Using standard results for the total variation distance
$$
|\pi_{p,0}(x_{1:j})-\pi_{p-1,0}(x_{1:j})| \leq \textrm{Osc}(B_0(\cdot,x_{1:j})) \prod_{s=1}^{p-2}\omega(B_s)
$$
where $\omega(B_s)$ is the Dobrushin coefficient of the Markov kernel $B_s$. Standard calculations yield that there exists a 
$\rho \in (0,1)$ such that $\textrm{Osc}(B_0(\cdot,x_{1:j}))\vee \omega(B_s) \leq C \rho$, where $C$ does not depend upon $x_{1:j}$.
Hence we have shown that 
\begin{equation}\label{eq:pref_eq1}
\sup_{x_{1:j}\in \mathsf{E}^j} |\pi_{p,0}(x_{1:j})-\pi_{p-1,0}(x_{1:j})| \leq C \rho^{p-1}.
\end{equation}
To prove the result of interest we have for any $\varphi\in\textrm{Osc}_1(\mathsf{E})$
\begin{eqnarray*}
|\pi_{p,0}(\varphi|x_{1:j})-\pi_{p-1,0}(\varphi|x_{1:j})|  & = &
\frac{1}{\pi_{p,0}(x_{1:j-1})}\int_{\mathsf{E}}\varphi(x_j)[\pi_{p,0}(x_{1:j})-\pi_{p-1,0}(x_{1:j})]dx_j + \\ & & 
\frac{\pi_{p-1,0}(x_{1:j-1})-\pi_{p,0}(x_{1:j-1})}{\pi_{p,0}(x_{1:j-1})\pi_{p-1,0}(x_{1:j-1})}
\int_{\mathsf{E}}\varphi(x_j)\pi_{p-1,0}(x_{1:j})dx_j.
\end{eqnarray*}
The conclusion then follows by using \eqref{eq:pref_eq1} and 1..

4.\ follows almost immediately from (A\ref{ass:2}) and the definition of the smoother. 5.\ follows from 4.\ on marginalization
and the compactness of $\mathsf{E}$.
\end{proof}

\begin{proof}[Proof of Theorem \ref{theo:thm1}]
Standard calculations for i.i.d.\ random variables and the Lipschitz property of $\varphi$ clearly yields:
$$
\mathbb{V}\textrm{ar}\Big[\frac{1}{N_p}\sum_{i=1}^N [\varphi(\Pi_{p,0}^{-1}(U^i))-\varphi(\Pi_{p-1,0}^{-1}(U^i))]\Big] \leq
 \frac{\|\varphi\|_{\textrm{Lip}}^2}{N_p} \int_{[0,1]}|\Pi_{p,0}^{-1}(u)-\Pi_{p-1,0}^{-1}(u)|^2du.
$$

Now we note that
$$
\int_{[0,1]}|\Pi_{p,0}^{-1}(u)-\Pi_{p-1,0}^{-1}(u)|^2du = W_2(\pi_{p,0},\pi_{p-1,0})^2
$$
where $W_2(\pi_{p,0},\pi_{p-1,0})$ is the 2-Wasserstein distance between $\pi_{p,0}$ and $\pi_{p-1,0}$.
As $\mathsf{X}$ is compact it follows
$$
W_2(\pi_{p,0},\pi_{p-1,0})^2 \leq \Big(\int_{\mathsf{X}}dx\Big)^2 \|\pi_{p,0}-\pi_{p-1,0}\|_{\textrm{tv}}
$$
where $\|\cdot\|_{\textrm{tv}}$ is the total variation distance. Under our assumptions one can show that 
there exists $\rho\in(0,1)$, $C<+\infty$ such that for any $p\geq 1$ (see Lemma \ref{lem:lem1} 2., which holds when $d=1$)
$$
\|\pi_{p,0}-\pi_{p-1,0}\|_{\textrm{tv}} \leq C \rho^{p-1}.
$$
The proof is then easily concluded.
\end{proof}

\begin{lem}\label{lem:lem2}
Assume (A\ref{ass:1}-\ref{ass:2}).  Then there exists $C<+\infty$, $\rho\in (0,1)$ such that for any $p\geq 1$ 
$$
\mathbb{E}[|\xi_{p,d}^1-\xi_{p-1,d}^1|^2] \leq C \rho^{p-1}.
$$
\end{lem}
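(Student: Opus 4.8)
The plan is to exploit the telescoping structure of the Knothe--Rosenblatt rearrangement, proving the bound one coordinate at a time and inducting on the dimension. Write $D_{p,j} := \mathbb{E}[|\xi_{p,j}^1 - \xi_{p-1,j}^1|^2]$ for the squared discrepancy accumulated over the first $j$ coordinates, so that the quantity of interest is $D_{p,d}$ with $D_{p,0}=0$. Because the two samples share the same uniform inputs $U_{1:d}^1$, one has the recursion
$$
D_{p,j} = D_{p,j-1} + \mathbb{E}\big[|\Pi_{p,0}^{-1}(U_j^1\mid\xi_{p,j-1}^1) - \Pi_{p-1,0}^{-1}(U_j^1\mid\xi_{p-1,j-1}^1)|^2\big],
$$
so the task reduces to bounding the last expectation by $C\rho^{p-1}$ plus a fixed multiple of $D_{p,j-1}$. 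For the base case $j=1$ there is no conditioning and the proof of Theorem \ref{theo:thm1} applies verbatim: $\mathbb{E}[|\xi_{p,1}^1-\xi_{p-1,1}^1|^2] = W_2(\pi_{p,0},\pi_{p-1,0})^2 \le R^2\,\|\pi_{p,0}-\pi_{p-1,0}\|_{\textrm{tv}} \le C\rho^{p-1}$, where $R=\mathrm{diam}(\mathsf{E})$ and the last inequality is Lemma \ref{lem:lem1}(2) applied to the first marginal.

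For $j\ge 2$ I would split the coordinate increment, using that $U_j^1$ is independent of the past, as
$$
\Pi_{p,0}^{-1}(U_j^1\mid\xi_{p,j-1}^1) - \Pi_{p-1,0}^{-1}(U_j^1\mid\xi_{p-1,j-1}^1)
= \big[\Pi_{p,0}^{-1}(U_j^1\mid\xi_{p,j-1}^1) - \Pi_{p-1,0}^{-1}(U_j^1\mid\xi_{p,j-1}^1)\big]
+ \big[\Pi_{p-1,0}^{-1}(U_j^1\mid\xi_{p,j-1}^1) - \Pi_{p-1,0}^{-1}(U_j^1\mid\xi_{p-1,j-1}^1)\big],
$$
and then apply $(a+b)^2\le 2a^2+2b^2$. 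The first bracket compares the two conditional laws at a \emph{common} conditioning point; conditioning on $\xi_{p,j-1}^1=x_{1:j-1}$ and integrating over the independent $U_j^1$ gives, by the one-dimensional Wasserstein/total-variation estimate, $\mathbb{E}[|\cdot|^2\mid\xi_{p,j-1}^1=x_{1:j-1}] = W_2(\pi_{p,0}(\cdot\mid x_{1:j-1}),\pi_{p-1,0}(\cdot\mid x_{1:j-1}))^2 \le R^2\,\|\pi_{p,0}(\cdot\mid x_{1:j-1})-\pi_{p-1,0}(\cdot\mid x_{1:j-1})\|_{\textrm{tv}}$, which is at most $C\rho^{p-1}$ uniformly in $x_{1:j-1}$ by Lemma \ref{lem:lem1}(3). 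Taking expectation over $\xi_{p,j-1}^1$ preserves this bound.

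The second bracket compares a single conditional inverse CDF of $\pi_{p-1,0}$ at the two conditioning points $\xi_{p,j-1}^1$ and $\xi_{p-1,j-1}^1$. The key estimate I would establish is that $x_{1:j-1}\mapsto\Pi_{p-1,0}^{-1}(u\mid x_{1:j-1})$ is Lipschitz with a constant $L$ uniform in $u\in[0,1]$ and in $p$; granting this, the squared second bracket is at most $L^2|\xi_{p,j-1}^1-\xi_{p-1,j-1}^1|^2$, whose expectation is $L^2 D_{p,j-1}$. Combining the two pieces yields $D_{p,j}\le(1+2L^2)D_{p,j-1}+2C\rho^{p-1}$; since $d$ is fixed, iterating this recursion from the base case produces $D_{p,d}\le C'\rho^{p-1}$, which is the claim.

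The main obstacle is the uniform Lipschitz-in-conditioning bound for the inverse conditional CDF. I would prove it without appealing to differentiability, to avoid issues with the generalized inverse: for conditioning points $z,z'$ set $a=\Pi_{p-1,0}^{-1}(u\mid z)$ and $b=\Pi_{p-1,0}^{-1}(u\mid z')$, so that $\Pi_{p-1,0}(a\mid z)=u=\Pi_{p-1,0}(b\mid z')$, and write $|\Pi_{p-1,0}(a\mid z')-\Pi_{p-1,0}(b\mid z')| = |\Pi_{p-1,0}(a\mid z')-\Pi_{p-1,0}(a\mid z)|$. The right-hand side is controlled by the Lipschitz dependence of the conditional CDF on its conditioning variables, which follows from writing the conditional density as $\pi_{p-1,0}(x_{1:j})/\pi_{p-1,0}(x_{1:j-1})$ and using that numerator and denominator are Lipschitz (Lemma \ref{lem:lem1}(5)) with the denominator bounded away from $0$ and $\infty$ (Lemma \ref{lem:lem1}(1)), then integrating over the compact range. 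On the left-hand side, the same two-sided density bound from Lemma \ref{lem:lem1}(1) forces $\Pi_{p-1,0}(\cdot\mid z')$ to increase with slope bounded below, giving $|\Pi_{p-1,0}(a\mid z')-\Pi_{p-1,0}(b\mid z')|\ge c|a-b|$; combining the two inequalities gives $|a-b|\le L|z-z'|$ with $L$ depending only on the constants of Lemma \ref{lem:lem1}. The delicate point throughout is ensuring these constants are independent of $p$, which is exactly why the uniform-in-$p$ bounds of Lemma \ref{lem:lem1}(1),(5) are needed, so that the geometric factor $\rho^{p-1}$ survives the induction.
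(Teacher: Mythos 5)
Your proposal is correct, and it reaches the conclusion by a genuinely different route in the key induction step. Both arguments share the same skeleton: induction (your coordinate recursion for $D_{p,j}$ is the paper's induction on $d$ in telescoped form), the same base case via the proof of Theorem \ref{theo:thm1}, and the same split of the last-coordinate increment into a ``common conditioning point'' piece and a ``perturbed conditioning point'' piece; your first bracket, bounded by $W_2^2 \leq R^2\,\|\cdot\|_{\textrm{tv}}$ and Lemma \ref{lem:lem1} 3., is exactly the paper's bound \eqref{eq:prf2_eq2}. The divergence is in the second piece. The paper stays at the level of total variation: it shows $\|\pi_{p-1,0}(\cdot|\xi_{p,d-1})-\pi_{p-1,0}(\cdot|\xi_{p-1,d-1})\|_{\textrm{tv}} \leq C|\xi_{p,d-1}-\xi_{p-1,d-1}|$ (its \eqref{eq:prf2_eq3}, proved by the same Lipschitz-density/bounded-denominator computation you use for the conditional CDF), then feeds the \emph{first} power of the discrepancy through $W_2^2 \leq R^2\,\|\cdot\|_{\textrm{tv}}$, arriving at $\mathbb{E}[|\xi_{p,d}-\xi_{p-1,d}|^2] \leq C(\rho^{p-1} + \mathbb{E}[|\xi_{p,d-1}-\xi_{p-1,d-1}|])$ and closing with Jensen and the induction hypothesis --- which halves the geometric exponent at each dimension step (the final rate is $\rho^{(p-1)/2^{d-1}}$ up to relabelling, still geometric since $d$ is fixed). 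You instead prove a uniform-in-$(u,p)$ Lipschitz bound for the conditional quantile map $z \mapsto \Pi_{p-1,0}^{-1}(u|z)$, which keeps the discrepancy \emph{squared} and yields the linear recursion $D_{p,j} \leq (1+2L^2)D_{p,j-1} + 2C\rho^{p-1}$, preserving the original rate $\rho^{p-1}$ --- a strictly sharper conclusion than the paper's, obtained from the same ingredients (Lemma \ref{lem:lem1} 1., 3., 5.). The price is your extra quantile lemma: its slope-lower-bound step, $|\Pi_{p-1,0}(a|z')-\Pi_{p-1,0}(b|z')| \geq c|a-b|$, implicitly requires $\mathrm{Leb}([a,b]\cap\mathsf{E}) \geq c'|a-b|$, i.e.\ effectively that $\mathsf{E}$ is an interval (or has no gaps of positive measure between points of the support); the paper's TV-only route sidesteps this and works directly with generalized inverses, at the cost of the Jensen degradation. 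Since the paper anyway takes $\mathsf{E}$ compact and its densities bounded below, this caveat is minor, but it is worth stating explicitly if you write the argument up.
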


\begin{proof}
The proof is by induction on $d$, the case $d=1$ being proved by the approach in the proof of Theorem \ref{theo:thm1}.
Throughout $C$ is a finite constant whose value may change from line-to-line, but does not depend upon $p$.

We suppose the result for $d-1$ and consider $d$.  
For simplicity of notation, we drop the superscript 1 from the notation, e.g.\ we write $\xi_{p,d}$ instead of $\xi_{p,d}^1$.
We have
\begin{eqnarray}
\mathbb{E}[|\xi_{p,d}-\xi_{p-1,d}|^2] & = & \mathbb{E}[\mathbb{E}[|\xi_{p,d}^1-\xi_{p-1,d}^1|^2|U_{1:d-1}]] \nonumber\\
& \leq & C \mathbb{E}[\|\pi_{p,0}(\cdot|\xi_{p,d-1}) - \pi_{p-1,0}(\cdot|\xi_{p-1,d-1})\|_{\textrm{tv}}] \label{eq:master_lem_rd}
\end{eqnarray}
where, to go to the second line, we have used (conditional upon $U_{1:d}$) the relationship between the squared 2-Wasserstein distance and the (generalized) inverse CDF,
along with the total variation bound as used in the proof of Theorem \ref{theo:thm1}.

Now, we have
$$
\|\pi_{p,0}(\cdot|\xi_{p,d-1}) - \pi_{p-1,0}(\cdot|\xi_{p-1,d-1})\|_{\textrm{tv}} \leq 
$$
\begin{equation}\label{eq:prf2_eq1}
\|\pi_{p,0}(\cdot|\xi_{p,d-1}) - \pi_{p-1,0}(\cdot|\xi_{p,d-1})\|_{\textrm{tv}} + 
\|\pi_{p-1,0}(\cdot|\xi_{p,d-1}) - \pi_{p-1,0}(\cdot|\xi_{p-1,d-1})\|_{\textrm{tv}}.
\end{equation}
By Lemma \ref{lem:lem1} 3.\ it follows that 
\begin{equation}\label{eq:prf2_eq2}
\|\pi_{p,0}(\cdot|\xi_{p,d-1}) - \pi_{p-1,0}(\cdot|\xi_{p,d-1})\|_{\textrm{tv}} \leq C \rho^{p-1}
\end{equation}
so we consider $\|\pi_{p-1,0}(\cdot|\xi_{p,d-1}) - \pi_{p-1,0}(\cdot|\xi_{p-1,d-1})\|_{\textrm{tv}}$. For any $\varphi\in\textrm{Osc}_1(\mathsf{E})$
\begin{multline*}
\pi_{p,0}(\varphi|\xi_{p,d-1}) - \pi_{p-1,0}(\varphi|\xi_{p,d-1}) = \\
\frac{1}{\pi_{p-1,0}(\xi_{p-1,d-1})}\int_{\mathsf{E}}\varphi(x)[\pi_{p-1,0}(\xi_{p,d-1},x)-\pi_{p-1,0}(\xi_{p-1,d-1},x)]dx + \\
\frac{\pi_{p-1,0}(\xi_{p-1,d-1})-\pi_{p-1,0}(\xi_{p,d-1})}{\pi_{p-1,0}(\xi_{p,d-1})\pi_{p-1,0}(\xi_{p-1,d-1})}
\int_{\mathsf{E}}\varphi(x) \pi_{p-1,0}(\xi_{p-1,d-1},x)dx.
\end{multline*}
Applying Lemma~\ref{lem:lem1} 4.\ to the first term on the R.H.S.\ and Lemma~\ref{lem:lem1} 5.\ to the second term on the R.H.S.\ along with the boundedness of $\varphi$ and compactness of $\mathsf{E}$, we have that
\begin{eqnarray*}
|\pi_{p,0}(\varphi|\xi_{p,d-1}) - \pi_{p-1,0}(\varphi|\xi_{p,d-1})|  & \leq & \frac{C}{\pi_{p-1,0}(\xi_{p-1,d-1})}|\xi_{p,d-1}-\xi_{p-1,d-1}|
+  \\ & &
\frac{C}{\pi_{p-1,0}(\xi_{p,d-1})\pi_{p-1,0}(\xi_{p-1,d-1})}|\xi_{p,d-1}-\xi_{p-1,d-1}|.
\end{eqnarray*}
Applying Lemma~\ref{lem:lem1} 1.\ we can then establish that
\begin{equation}\label{eq:prf2_eq3}
\|\pi_{p-1,0}(\cdot|\xi_{p,d-1}) - \pi_{p-1,0}(\cdot|\xi_{p-1,d-1})\|_{\textrm{tv}} \leq C |\xi_{p,d-1}-\xi_{p-1,d-1}|.
\end{equation}
Combining \eqref{eq:prf2_eq2} and \eqref{eq:prf2_eq3} with \eqref{eq:prf2_eq1} and noting \eqref{eq:master_lem_rd}, we have shown that
$$
\mathbb{E}[|\xi_{p,d}-\xi_{p-1,d}|^2] \leq C\Big(\rho^{p-1} + \mathbb{E}[|\xi_{p,d-1}-\xi_{p-1,d-1}|]\Big).
$$
The proof is completed by using the Jensen inequality and the induction hypothesis.
\end{proof}

\begin{proof}[Proof of Theorem \ref{theo:thm2}]
We have
$$
\mathbb{V}\textrm{ar}\Big[
\frac{1}{N_p}\sum_{i=1}^{N_p} [\varphi(\xi_{p,d}^i)-\varphi(\xi_{p-1,d}^i)]
\Big] \leq \frac{\|\varphi\|_{\textrm{Lip}}^2}{N_p}\mathbb{E}[|\xi_{p,d}^1-\xi_{p-1,d}^1|^2] .
$$
The proof is then completed by applying Lemma \ref{lem:lem2}.
\end{proof}

\section{Linear Gaussian Result}

\begin{proof}[Proof of Theorem \ref{thm:linearGaussian}]
The Rauch-Tung-Striebel smoother gives an expression of the smoothed mean $m_{p|n}$ and variance $v_{p|n}$ at time $p$ given the observations $y_0,\dots,y_n$ as
\begin{eqnarray*}
m_{p|n} & = & m_{p|p} + c_p( m_{p+1|n} - m_{p+1|p} ) \\
v_{p|n} & = & v_{p|p} + c_p^2( v_{p+1|n} - v_{p+1|p} ),
\end{eqnarray*}
with $c_p = \alpha m_{p|p}/m_{p+1|p}$, where $m_{p+1|p}$ and $v_{p+1|p}$ are the predicted mean and variance at time $p+1$ given the observations $y_0,\dots,y_p$. It follows that the mean $m_p$ and variance $v_p$ of $\pi_{p,0}$ satisfy similar relations to the filtered means and variances:
$$
m_p = \sum_{i = 0}^p m_{i|i} \alpha^i (1 - \mathbb{I}_{i < p}\alpha^2 d_p) \prod_{j=0}^{i-1} d_j
\quad\mbox{ and }\quad
v_p = \sum_{i = 0}^p v_{i|i} \alpha^{2i} (1 - \mathbb{I}_{i < p}\alpha^4 d_p^2) \prod_{j=0}^{i-1} d_j^2,
$$
where $d_p = v_{p|p}/v_{p+1|p}$ and where $\mathbb{I}_c$ is the indicator of condition $c$. The objective is to compute the order of
$$
\Pi_{p,0}^{-1}(u) - \Pi_{p-1,0}^{-1}(u) = m_p - m_{p-1} + \sqrt{2}\erf^{-1}(2u-1) (\sigma_p - \sigma_{p-1})
$$
where $\sigma_p = \sqrt{v_p}$. From the above expression, it follows easily that
$$
m_p - m_{p-1} = \alpha^p(m_{p|p} - m_{p|p-1}) \prod_{i=0}^{p-1} d_i
\quad\mbox{ and }\quad
v_p - v_{p-1} = \alpha^{2p}(v_{p|p} - v_{p|p-1}) \prod_{i=0}^{p-1} d_i^2.
$$
which yields the same order for both $m_p - m_{p-1}$ and $\sigma_p - \sigma_{p-1}$. The desired result follows from the fact that
$$
\alpha^p \prod_{i=0}^{p-1} d_i = \alpha^p \prod_{i=0}^{p-1} \dfrac{v_{i|i}}{\alpha v_{i|i} + \beta^2} = \prod_{i=0}^{p-1} \dfrac{\alpha}{\alpha^2 + \beta^2/v_{i|i}} = \prod_{i=0}^{p-1} \bigg( \alpha + \dfrac{\beta^2}{ \alpha v_{i|i}} \bigg)^{-1},
$$
and from the assumption that $v_{p|p} = \mathbb{V}\textrm{ar}( X_p \mid y_{0:p} ) \approx \gamma^2$ for all $p$ large enough.
\end{proof}

\end{document}